\documentclass{paper}

\usepackage{amsmath,amssymb,amsthm,graphicx}
\usepackage{yhmath}
\usepackage{enumerate}

\usepackage[utf8]{inputenc}
\usepackage[T1]{fontenc}
\usepackage{dcolumn}
\usepackage{bm}
\usepackage{appendix}

\usepackage{lipsum}
\usepackage{mathtools}
\usepackage{cuted}

\usepackage{subcaption}
\graphicspath{{img/},{fig/}}





\newcommand{\sech}{\operatorname{sech}}
\newcommand{\bu} {{\bf u}}
\newcommand{\bv} {{\bf v}}
\newcommand{\bs} {{\bf s}}

\newcommand\egaldef{\stackrel{\mbox{\upshape\tiny def}}{=}}

\newcommand\1{\leavevmode\hbox{\rm \small1\kern-0.35em\normalsize1}}
\newcommand\ind[1]{\1_{\{#1\}}}
\newcommand\EE{\mathsf{E}}
\def\DD{\displaystyle}

\DeclareMathOperator*{\eg}{=}


\newtheorem{prop}{Proposition}[section]

\newtheorem{lem}[prop]{Lemma}

\begin{document}

\title{Thermodynamics of Restricted Boltzmann Machines and Related Learning Dynamics}
\author{A. Decelle \and G. Fissore \and C. Furtlehner}



\maketitle

\abstract{
We investigate the thermodynamic properties of a Restricted Boltzmann Machine (RBM), a simple energy-based generative model used in the context of unsupervised learning.
Assuming the information content of this model to be mainly reflected by the spectral properties of its weight matrix $W$, we try to make a realistic analysis by averaging over an appropriate statistical ensemble of RBMs. 

First, a phase diagram is derived. Otherwise similar to that of the Sherrington-Kirkpatrick (SK) model with ferromagnetic couplings, the RBM's phase diagram presents a ferromagnetic phase which may or may not be of compositional type depending on the kurtosis of the distribution of the components of the singular vectors of $W$.

Subsequently, the learning dynamics of the RBM is studied in the thermodynamic limit. 
A ``typical'' learning trajectory is shown to solve an effective dynamical equation, based on the aforementioned ensemble average and explicitly involving order parameters obtained from the thermodynamic analysis. In particular, this let us show how the evolution of the dominant singular values of $W$, and thus of the unstable modes, is driven by the input data. At the beginning of the training, in which the RBM is found to operate in the linear regime, the unstable modes reflect the dominant covariance modes of the data. In the non-linear regime, instead, the selected modes interact and eventually impose a matching of the order parameters to their empirical counterparts estimated from the data.

Finally, we illustrate our considerations by performing experiments on both artificial and real data, showing in particular how the RBM operates in the ferromagnetic compositional phase.
 
}


\section{Introduction}
The Restricted Boltzmann Machine (RBM)~\cite{Smolensky} is an important machine learning tool used in
many applications, by virtue of its ability to model complex probability distributions.
It is a neural network which serves as a generative model, in the sense that it is able to approximate the probability distribution corresponding to the empirical distribution of any set of high-dimensional data points living in a discrete or real space of dimension $N \gg 1$.
From the theoretical point of view, the RBM is of high interest as it is one of the simplest neural network generative models and the probability distribution that it defines presents a simple analytic form.
Moreover, there are clear connections between RBMs and well known disordered systems in statistical physics. As an example, when data are composed by vectors with binary components the discrete RBM takes the form of an heterogeneous Ising model composed of one layer of visible units (the observable variables) connected to one layer of hidden units
(the latent or hidden variables building up the dependencies between the visible ones),
in which couplings and fields are obtained from the training data through a learning procedure.
In order to build more powerful models, RBMs can be stacked to form ``deep'' architectures. In such a case, they can form a multi-layer generative model known as a Deep Boltzmann Machine (DBM) ~\cite{salakhutdinov2009deep} or they can be stacked and trained layerwise as a pre-training procedure for neural networks ~\cite{HiSa}.
The standard learning algorithms in use are the contrastive divergence~\cite{Hinton_CD} (CD) and the refined Persistence CD~\cite{Tieleman} (PCD), which are based on a quick Monte Carlo estimation of the response function of the RBM and are efficient and well documented~\cite{Hinton_guide}. Nevertheless, despite some interesting
interpretations of CD in terms of non-equilibrium statistical physics~\cite{Salazar},
the learning of RBMs remains a set of obscure recipes from the statistical physics point of view: hyperparameters (like the size of the
hidden layer) are supposed to be set empirically without any theoretical guidelines.

Historically, statistical physics played a central role in studying the theoretical foundations of neural networks. In particular, during the 1980s many works on the Hopfield model~\cite{Hopfield,AmGuSo3,Gardner,Derrida-Gardner} managed to define its learning capacity and to compute the number of independent patterns that it could store.
It is worth noticing that, as RBMs are ultimately defined as a Boltzmann distribution with pairwise interactions on a bipartite graph, they can be studied in a way similar to that used for the Hopfield model. The analogy is even stronger since connections between the Hopfield model and RBMs have been made explicit when using Gaussian hidden variables~\cite{BarraBSC12}, here the number of patterns of the Hopfield model corresponding to the number of hidden units.
Motivated by a renewed excitement for neural networks, recent works actually propose to exploit the statistical physics formulation of the RBM to understand what is its learning capacity and how mean-field methods can be exploited to improve the model. In~\cite{TAP_train,huang2015advanced,takahashi2016mean}, mean-field based learning methods using TAP equations are developed. TAP solutions are usually expected to define a decomposition of the 
measure in terms of pure thermodynamical states and are useful both as an algorithm to compute the marginals of the variables of the model and to identify the pure states when they are yet unknown. For instance, in a sparse explicit Boltzmann machine (i.e. without latent variables) this implicit clustering 
can be done by means of belief propagation fixed points~\footnote{a somewhat different form of the TAP equations} with simple empirical learning rules~\cite{FuLaAu}. 
In~\cite{Barra,HHuang}, an analysis of the static properties of RBMs is done assuming a given weight matrix $W$, in order to understand collective phenomena 
in the latent representation, i.e. the way latent variables organize themselves in a compositional phase~\cite{Agliari,TuMo} to represent actual data. 
These analysis make use of the replica trick (or equivalent) making the common
assumption that the components of the weight matrix $W$ are i.i.d.; despite the fact that this approach may give some insights into the retrieval phase, 
this approximation is problematic since, as far as a realistic RBM is concerned (an RBM learned on data), 
the learning mechanism introduces correlations within the weights of $W$ and then it seems rather crude to continue to assume the independence and hope 
to understand the realistic statistical properties of the model.

Concerning the learning procedure of neural networks, many recent statistical physics based analyses have been proposed,
most of them within teacher-student setting~\cite{ZdKr}. This imposes a rather strong assumption on the data in the sense that
it is assumed that these are generated from
a model belonging to the parametric family of interest, hiding as a consequence the role played by the data themselves in the procedure.
From the analysis of related linear models~\cite{TiBi,Bou-Kamp}, it is already a well established fact
that a selection of the most important modes of the singular values
decomposition (SVD) of the data is performed in the linear case.
In fact in the simpler context of linear feed-forward models the learning dynamics can be fully characterized by means of the
SVD of the data matrix~\cite{Ganguli}, showing in particular the emergence of each mode by order of importance with respect to the corresponding singular values.

First steps to follow this guideline have been done in~\cite{DeFiFu}, in the context of a general RBM and to address the shortcomings of previous analyses,
in particular concerning the assumptions over the weights distribution. To this end it has been proposed 
to characterize both the learned RBM and the learning process itself by means of the SVD spectrum of the weight matrix in order to 
single out the information content of the RBM.
It is assumed that the SVD spectrum is split in a continuous bulk of singular vectors corresponding to noise and a set of outliers that represent the information content.
By doing this it is possible to go beyond the usual unrealistic assumption of i.i.d. weights made for analyzing RBMs.
Proceeding along this direction, in the present work we first present a thermodynamic analysis of RBMs under the more realistic assumptions over the weight matrix that we propose. 
Then, on the same basis, the learning dynamics of RBMs is studied by direct analysis of the dynamics of the SVD modes, both in the linear and non-linear regimes.

\begin{figure}[ht]
\centerline{\resizebox*{0.7\textwidth}{!}{\input{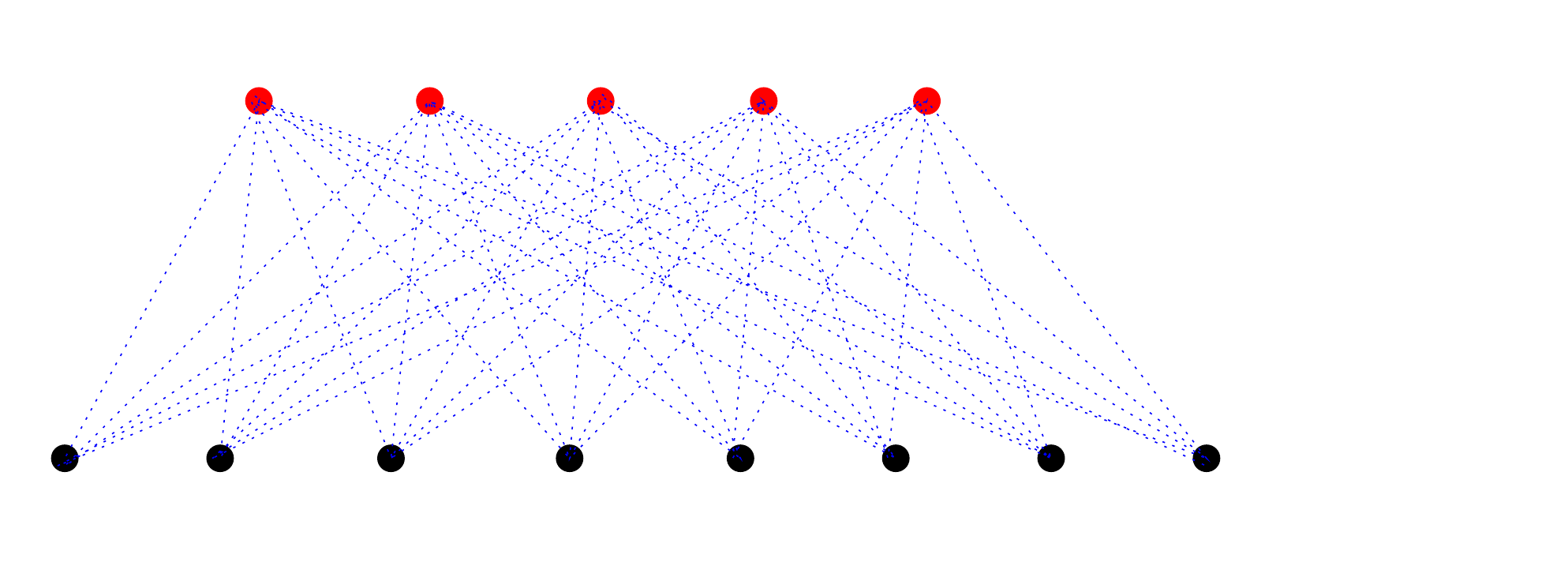_t}}}
\caption{\label{fig:rbm} bipartite structure of the RBM.}
\end{figure}

The paper is organized as follows:
in Section~\ref{sec:rbm} we introduce the RBM model and its associated learning procedures.
Section~\ref{sec:thermo} presents the static thermodynamical properties of the RBM
with realistic hypothesis on its weights:
a statistical ensemble of weight matrices is discussed in Section~\ref{sec:stat-ensemble}; mean-field equations in the replica-symmetric (RS)
framework are given in Section~\ref{sec:RS} and the corresponding phase diagram is
studied in Section~\ref{sec:phasediag} with a proper delimitation of the RS domain where the learning procedure is supposed to take place.
The ferromagnetic phase is studied in great details in~\ref{sec:ferro_phase} by looking in particular at the conditions leading
to a compositional phase. Section~\ref{sec:rbmdyn} is devoted to the learning dynamics.
In Section~\ref{sec:equations}, a deterministic learning equation is derived in the thermodynamic limit and a set of dynamical parameters is shown to emerge naturally from the SVD of the weight matrix.
This equation is analyzed for linear RBMs in Section~\ref{sec:linear} in order to identify the unstable
deformation modes of $W$ that result in the first emerging patterns 
at the beginning of the learning process; the non-linear regime is described in Section~\ref{sec:non-linear},
on the basis of the thermodynamic analysis, by numerically solving the effective learning equations in simple cases.
Our analysis is finally illustrated and validated in Section~\ref{sec:validation} by actual tests on the MNIST dataset.

\section{The RBM and its associated learning procedure}\label{sec:rbm}
An RBM is a Markov random field with pairwise interactions defined on a bipartite graph formed by two layers of non-interacting variables: 
the visible nodes and the hidden nodes representing respectively data configurations and latent representations (see Figure~\ref{fig:rbm}). 
The former noted $\bm{s} = \{s_i,i=1\ldots N_v\}$ 
correspond to explicit representations of the data while the latter noted $\bm{\sigma} = \{\sigma_j,j=1\ldots N_h\}$ 
are there to build arbitrary dependencies among the visible units. They play the role of an interacting field among visible nodes. 
Usually the nodes are binary-valued (of Boolean type or Bernoulli distributed) but Gaussian distributions or more broadly arbitrary distributions on real-valued bounded support are also used~\cite{general_RBM}, ultimately making RBMs adapted to more heterogeneous data sets. 
Here to simplify we assume that visible and hidden nodes will be taken as binary variables $s_i,\sigma_j \in \{-1,1\}$ 
(using $\pm 1$ values gives the advantage of working with symmetric equations hence avoiding to deal with the ``hidden'' biases on the variables that appear when considering binary $\{0,1\}$ variables).
Like in the Hopfield model~\cite{Hopfield}, which can actually be cast into an RBM~\cite{BarraBSC12}, an energy function is defined for a configuration of nodes
\begin{equation}
    E(\bm{s},\bm{\sigma}) = -\sum_{i,j} s_i W_{ij} \sigma_j + \sum_{i=1}^{N_v} \eta_i s_i + \sum_{j=1}^{N_h} \theta_j \sigma_j
    \label{eq_ener_rbm}
\end{equation}
and this is exploited to define a joint distribution
between visible and hidden units, namely the Boltzmann distribution

\begin{equation}
    p(\bm{s},\bm{\sigma}) = \frac{e^{-E(\bm{s},\bm{\sigma})}}{Z}
    \label{eq_proba_rbm}
\end{equation}

\noindent where  $W$ is the weight matrix and $\bm{\eta}$ and $\bm{\theta}$ are biases, or external fields on the variables.
\( \textstyle Z = \sum_{\bm{s},\bm{\sigma}} e^{-E(\bm{s},\bm{\sigma})} \) is the partition function of the system. The joint distribution between visible
variables is then obtained by summing over hidden ones.
In this context, learning the parameters of the RBM means that, given a dataset of $M$ samples composed of 
$N_v$ variables, we ought to infer values to $W$, $\bm{\eta}$ and $\bm{\theta}$ such that new generated data obtained by sampling this distribution 
should be similar to the input data. The general method to infer the parameters is to maximize the log likelihood of the model, where the pdf (\ref{eq_proba_rbm}) has 
first been summed over the hidden variables
\begin{equation}\label{eq:LL}
  \mathcal{L} = \sum_j \langle \log(2 \cosh(\sum_i W_{ij} s_i - \theta_j))\rangle_{\rm Data} -\sum_i \eta_i \langle s_i \rangle_{\rm Data} - \log(Z).
\end{equation}
Different learning methods have been set up and proven to work efficiently, in particular the
contrastive divergence (CD) algorithm from Hinton~\cite{Hinton_CD} and more recently TAP based learning~\cite{TAP_train}.
They all correspond to expressing the gradient ascent on the likelihood as
\begin{align}
\Delta W_{ij} &= \gamma \left( \langle s_i \sigma_j p(\sigma_j|\bm{s}) \rangle_{\rm Data} - \langle s_i \sigma_j \rangle_{p_{\rm RBM}} \right)\label{eq:cd1}\\[0.2cm]
\Delta \eta_i &= \gamma \left( \langle s_i \rangle_{p_{\rm RBM}} - \langle s_i \rangle_{\rm Data} \right)\label{eq:cd_eta}\\[0.2cm]
\Delta \theta_j &= \gamma \left( \langle \sigma_j \rangle_{p_{\rm RBM}} -\langle \sigma_j p(\sigma_j|\bm{s}) \rangle_{\rm Data} \right)\label{eq:cd_theta}
\end{align}
where $\gamma$ is the learning rate. The main problem are the $\langle \cdots \rangle_{p_{\rm RBM}}$ terms  on the right hand side of~(\ref{eq:cd1}-\ref{eq:cd_theta}).
These are not tractable and the various methods basically differ in their way of estimating those terms (Monte-Carlo Markov chains, naive mean-field, TAP\ldots).
For an efficient learning the $\langle \cdots \rangle_{\rm Data}$ terms must also be approximated
by making use of random mini-batches of data at each step.

\section{Static thermodynamical properties of an RBM}\label{sec:thermo}

\subsection{Statistical ensemble of RBMs}\label{sec:stat-ensemble}
When analyzing the thermodynamical properties of RBMs, it is common to assume that the weights $W_{ij}$
are i.i.d. random variables, like for example in~\cite{TuMo,Barra,HHuang}. This
generally leads to a Marchenko-Pastur (MP) distribution~\cite{MP} of the singular values of $W$,
which is unrealistic.

In order to clarify our notation, let us recall the definition of the singular value decomposition (SVD).
As a generalization of eigenmodes decomposition to rectangular matrices, the SVD for a RBM is given by
\begin{equation}
\mathbf{W} = \mathbf{U \Sigma} \mathbf{V}^T
\end{equation}
where \(\mathbf{U}\) is an orthogonal \(N_v \times N_h\)  matrix whose columns are the left singular vectors \(\mathbf{u}^{\alpha} \), \(\mathbf{V}\) is an orthogonal \(N_h \times N_h\) matrix whose columns are the right singular vectors \( \mathbf{v}^{\alpha} \) and \( \mathbf{\Sigma} \) is a diagonal matrix whose elements are the singular values \(w_{\alpha}\). The separation into left and right singular vectors is due to the rectangular nature of the decomposed matrix, and the similarity with eigenmodes decomposition
is revealed by the following SVD equations

\begin{align}
\mathbf{W} \mathbf{v}^{\alpha} & = w_{\alpha} \mathbf{u}^{\alpha} \nonumber \\
\mathbf{W}^T \mathbf{u}^{\alpha} & = w_{\alpha} \mathbf{v}^{\alpha} \nonumber
\end{align}

In~\cite{DeFiFu} it is argued that the MP distribution of SVD modes actually corresponds to the noise of the weight matrix,
while the information content of the RBM is better expressed by the presence of SVD modes outside of this bulk.
This leads us to write the weight matrix as
\begin{equation}\label{eq:wsvd}
	W_{ij} = \sum_{\alpha=1}^{K} w_\alpha u_i^{\alpha} v_j^{\alpha} + r_{ij}
\end{equation}
where the $w_\alpha = O(1)$ are isolated singular values (describing a rank $K$ matrix), the $\bm{u}^\alpha$ and $\bm{v}^\alpha$ are the dominant eigenvectors of the SVD 
decomposition and the $r_{ij}={\cal N}(0,\sigma^2/L)$ are i.i.d. terms corresponding to noise, with $L=\sqrt{N_h N_v}$.
The $\{u^\alpha\}$ and $\{v^\alpha\}$  are two sets of respectively $N_v$ and $N_h$-dimensional orthonormal 
vectors, which means that their components are respectively 
$O(1/\sqrt{N_v})$ and $O(1/\sqrt{N_h})$, and $K\le N_v,N_h$. We assume $N_h<N_v$ to be the rank of $W$ and $w_\alpha >0$ and $O(1)$ for all $\alpha$. 
Note that in the limit $N_v\to\infty$ and $N_h\to\infty$ with $\kappa \egaldef N_h/N_v$ fixed and $K/L\to 0$, 
$WW^T$ has a spectrum density $\rho(\lambda)$ composed of a Marchenko-Pastur bulk of eigenvalues
and of set of discrete modes:
\[
\rho(\lambda) = \frac{L}{2\pi\sigma^2}\frac{\sqrt{(\lambda^+-\lambda)(\lambda-\lambda^-)}}{\kappa\lambda}
\ind{\lambda\in[\lambda^-,\lambda^+]}
+\sum_{\alpha=1}^K\delta(\lambda-w_\alpha^2),
\]
with
\[
\lambda^{\pm} \egaldef \sigma^2\bigl(\kappa^{\frac{1}{4}}\pm\kappa^{-\frac{1}{4}}\bigr)^2.
\]
The interpretation for the noise term $r_{ij}$ is given by the presence of an extensive number of modes at the bottom of the spectrum, along which the variables won't be able
to condense but that still contribute to the fluctuations.
In the present form our model of RBM is similar
to the Hopfield model and recent generalizations~\cite{Mezard},
the patterns being represented by the SVD modes outside of the bulk. The main difference, in addition to the bipartite structure of the graph,
is the non-degeneracy of the singular values $w_\alpha$.
The choice made here is to consider $K$ finite, giving $W_{ij} = O(1/N)$ which means that the thresholds $\theta_j$ (having the meaning of
feature detectors) should be $O(1)$ because feature $j$ is detected when an extensive number of spins $S_i$ is aligned with $W_{ij}$.
In addition, this allows us to assume simple distributions for the components of
$\bm{u}^\alpha$ and $\bm{v}^\alpha$ (for instance, considering them i.i.d.).
Altogether, this defines the statistical ensemble of RBM to which we restrict our analysis of the learning procedure.

Another approach would be to consider $K=N_h$ extensive, thereby assuming that all modes can potentially condense even though they are associated
to dominated singular values. In that case, the separation between the condensed modes and the rest should be made when order parameters are introduced
and the noise would then correspond to uncondensed modes. If the number of condensed modes is assumed to be extensive, then
we should instead consider an average over the orthogonal group which would lead to a slightly
different mean-field theory~\cite{PaPo,OpWi}.

\subsection{Replica symmetric Mean-field equation}\label{sec:RS}

Our analysis in the thermodynamic limit follows classical treatments using replicas,
like~\cite{AmGuSo1,AmGuSo3} for the Hopfield model or~\cite{Barra} for bipartite models.
The starting point is to express the average over $u,v$ and $r_{ij}$ of the log partition function $Z$ in~(\ref{eq_proba_rbm}) with the help of the replica trick:
\begin{equation*}
\EE_{u,v,r}[\log(Z)] = \lim_{p\to 0}\frac{d}{dp}\EE_{u,v,r}[Z^p].\\[0.2cm]
\end{equation*}
First the average over $r_{ij}$ yields
\[
\exp\Bigl[\frac{\sigma^2}{2L}\Bigl(\sum_a s_i^a\sigma_j^a\Bigr)^2\Bigr] =
\exp\Bigl[\frac{\sigma^2}{2L}\Bigl(p+\sum_{a\ne b} s_i^as_i^b\sigma_j^a\sigma_j^b\Bigr)\Bigr].
\]
After this averaging, 4 sets of order parameters $\{(m_\alpha^a,\bar m_\alpha^a),a=1,\ldots p,\alpha=1,\ldots K\}$
and $\{(Q_{ab},\bar Q_{ab}),a,b=1,\ldots p, a\ne b\}$  are introduced with the help of two distinct
Hubbard-Stratonovich transformations.
The first one corresponds to
\begin{align*}
\exp\Bigl[\frac{\sigma^2}{2L}\Bigl(\sum_{i,j,a\ne b} s_i^as_i^b\sigma_j^a\sigma_j^b\Bigr)\Bigr]
&= \int\prod_{a\ne b}\frac{dQ_{ab}d\bar Q_{ab}}{2\pi}\\[0.2cm]
&\times
\exp\Bigl[-\frac{L\sigma^2}{2}\sum_{a\ne b}\Bigl(Q_{ab}\bar Q_{ab} - \frac{Q_{ab}}{N_v}\sum_i s_i^a s_i^b
-\frac{\bar Q_{ab}}{N_h}\sum_j\sigma_j^a\sigma_j^b\Bigr)\Bigr].
\end{align*}
The second one is aimed at extracting magnetization's contributions correlated with the modes:
\begin{align*}
\exp\Bigl(L\sum_{\alpha}w_\alpha s_\alpha^a\sigma_\alpha^a\Bigr) & \propto
\int\prod_\alpha \frac{dm_\alpha^a d\bar m_\alpha^a}{2\pi} \\[0.2cm]
& \times \exp\Bigl(-L\sum_\alpha w_\alpha\bigl(m_\alpha^a\bar m_\alpha^a- m_\alpha^a s_\alpha^a
-\bar m_\alpha^a\sigma_\alpha^a \bigr) \Bigr),
\end{align*}
with
\begin{equation}
s_\alpha^a \egaldef \frac{1}{\sqrt{L}}\sum_i s_i u_i^\alpha\qquad\text{and}\qquad 
\sigma_\alpha^a \egaldef \frac{1}{\sqrt{L}}\sum_j\sigma_j^a v_j^\alpha,\label{eq:salpha}
\end{equation}
These variables represent the following quantities:
\begin{align*}
m_\alpha^a \sim  E_{u,v,r}\bigl(\langle \sigma_\alpha^a\rangle\bigr)\qquad
\bar m_\alpha^a \sim E_{u,v,r}\bigl(\langle s_\alpha^a\rangle\bigr)\\[0.2cm]
Q_{ab} \sim E_{u,v,r}\bigl(\langle \sigma_i^a \sigma_i^b\rangle\bigr)\qquad
\bar Q_{ab} \sim E_{u,v,r}\bigl(\langle s_j^a s_j^b\rangle\bigr),
\end{align*}
namely the correlations of the hidden [resp. visible] states with the left [resp. right]
singular vectors and the Edward-Anderson (EA) order parameters measuring the correlation
between replicas of hidden or visible states.
$\EE_{u}$ and  $\EE_{v}$ denote an average  w.r.t. the rescaled components $u\simeq \sqrt{N_v}u_i^\alpha$ and $v\simeq \sqrt{N_h}v_j^\alpha$ of the SVD modes.
The transformations involve pairs of complex integration variables because of the asymmetry introduced by the two-layers structure
in contrast to fully connected models.

We obtain the following representation:
\begin{align*}
\EE_{u,v,r}[Z^p] &= \int \prod_{a,\alpha}\frac{dm_\alpha^a d\bar m_\alpha^a}{2\pi}\prod_{a\ne b}\frac{dQ_{ab}d\bar Q_{ab}}{2\pi}\\[0.2cm]
&\times\exp\Bigl\{-L\Bigl(\sum_{a,\alpha}w_\alpha m_\alpha\bar m_\alpha+\frac{\sigma^2}{2}\sum_{a\ne b}Q_{ab}\bar Q_{ab}-\frac{1}{\sqrt{\kappa}}A[m,Q]
-\sqrt{\kappa}B[\bar m,\bar Q]\Bigr)\Bigr\}
\end{align*}
with $\kappa=N_h/N_v$ and
\begin{align}
A[m,Q] &\egaldef \log\Bigl[\sum_{S^a\in\{-1,1\}}\EE_u \Bigl(e^{\frac{\sqrt{\kappa}\sigma^2}{2}\sum_{a\ne b}Q_{ab}S^a S^b
+\kappa^{\frac{1}{4}}\sum_{a,\alpha}(w_\alpha m_\alpha^a -\eta_\alpha)u^\alpha S^a}\Bigr)\Bigr],\label{eq:Ap}\\[0.2cm]
B[\bar m,\bar Q] &\egaldef \log\Bigl[\sum_{S^a\in\{-1,1\}}
\EE_v \Bigl(e^{\frac{\sqrt{\kappa}\sigma^2}{2}\sum_{a\ne b}\bar Q_{ab}\sigma^a \sigma^b +\kappa^{-\frac{1}{4}}\sum_{a,\alpha}(w_\alpha \bar m_\alpha^a -\theta_\alpha)v^\alpha \sigma^a}\Bigr)\Bigr],\label{eq:Bp}\\[0.2cm]
\end{align}
with
\[
\theta_\alpha \egaldef \frac{1}{\sqrt{L}}\sum_j \theta_j v_j^\alpha = O(1).
\]
Since $\{v^\alpha\}$ is an incomplete basis we also need to take care of the potential residual transverse parts $\eta^{\bot}$ and $\theta^{\bot}$, such
that the following decompositions hold:
\begin{align}
\eta_i &= \eta_i^{\bot} +\sqrt{L}\sum_\alpha\eta_\alpha u_i^\alpha,\label{def:eta}\\[0.2cm]
\theta_j &= \theta_j^{\bot} +\sqrt{L}\sum_\alpha\theta_\alpha v_j^\alpha.\label{def:theta}
\end{align}
To keep things tractable, both $\eta^{\bot}$ and $\theta^{\bot}$ will be considered negligible in the sequel. 
Taking into account these components would lead to the addition of a random field to the effective RS field of the variables and 
eventually to a richer set of saddle point solutions. 
Note that the order of magnitude of $\eta_\alpha$ and $\theta_\alpha$ is at this stage an assumption. 
If $\eta_i$ and $u_i^\alpha$ (or $\theta_j$ and $v_j^\alpha$) were uncorrelated they would scale as $1/\sqrt{L}$. 
Moreover, regarding the ensemble average, we will consider $\eta_\alpha$ and $\theta_\alpha$ fixed in the sequel.

The thermodynamic properties are obtained by first making a saddle point approximation possible by letting first $L\to\infty$ and taking the limit $p\to 0$ afterwards.
We restrict here the discussion to RS saddle points~\cite{MePaVi}. The breakdown of RS can actually be determined
by computing the so-called AT line~\cite{AlTh} (see Appendix~\ref{app:AT}).
At this point we assume a non-broken replica symmetry.
The set $\{Q_{ab},\bar Q_{ab}\}$ reduces then to a pair $(q,\bar q)$  of spin glass parameters,
i.e. $Q_{ab} = q$ and $\bar Q_{ab} = \bar q$ for all $a\ne b$,
while  quenched magnetizations on the SVD directions are now represented by
$\{(m_\alpha,\bar m_\alpha),\alpha=1,\ldots K\}$.

Taking the limit $p\to 0$ yields the following limit for the free energy:
\begin{align}
f[m,\bar m,q,\bar q] & = \sum_\alpha w_\alpha m_\alpha\bar m_\alpha - \frac{\sigma^2}{2}q\bar q +\frac{\sigma^2}{2}(q+\bar q) \nonumber\\[0.2cm]
&-\frac{1}{\sqrt{\kappa}} \EE_{u,x}\Bigl[\log2\cosh\bigl(h(x,u)\bigr)\Bigr]
-\sqrt{\kappa} \EE_{v,x}\Bigl[\log2\cosh\bigl(\bar h(x,v)\bigr)\Bigr].\label{eq:freen}
\end{align}
Assuming a replica-symmetric phase, the saddle-point equations are given by
\begin{align}
m_\alpha &= \kappa^{\frac{1}{4}}\EE_{v,x}\Bigl[v^\alpha\tanh\bigl(\bar h(x,v)\bigr)\Bigr],\qquad\qquad
q = \EE_{v,x}\Bigr[\tanh^2\bigl(\bar h(x,v)\bigr)\Bigr]\label{eq:mf1}\\[0.2cm]
\bar m_\alpha &= \kappa^{-\frac{1}{4}}\EE_{u,x}\Bigl[u^\alpha\tanh\bigl(h(x,u)\bigr)\Bigr]
,\qquad\qquad \bar q = \EE_{u,x}\Bigl[\tanh^2\bigl(h(x,u)\bigr)\Bigr]\label{eq:mf2}
\end{align}
where
\begin{align*}
h(x,u) &\egaldef  \kappa^{\frac{1}{4}}\bigl(\sigma\sqrt{q}x+\sum_\gamma (w_\gamma m_\gamma-\eta_\gamma) u^\gamma\bigr)\\[0.2cm]
\bar h(x,v) &\egaldef \kappa^{-\frac{1}{4}}\bigl(\sigma\sqrt{\bar q}x+\sum_\gamma (w_\gamma\bar m_\gamma-\theta_\gamma) v^\gamma\bigr),
\end{align*}
and $\kappa=N_h/N_v$, with $\EE_{u,x}$ and  $\EE_{v,x}$ denoting an average over the Gaussian variable 
$x={\cal N}(0,1)$ and the rescaled components $u\sim \sqrt{N_v}u_i^\alpha$ and $v\sim \sqrt{N_h}v_j^\alpha$ 
of the SVD modes. We note that the equations are symmetric under the exchange $\kappa\to\kappa^{-1}$, simultaneously with 
$m\leftrightarrow\bar m$, $q\leftrightarrow\bar q$ and $\eta\leftrightarrow\theta$, given that $u$ and $v$ have the same distribution.
In addition, for independently distributed $u_i^\alpha$ and $v_j^\alpha$ and vanishing fields ($\eta=\theta=0$),
solutions corresponding to non-degenerate magnetizations have symmetric counterparts: each pair of non-vanishing magnetizations
can be negated independently as $(m_\alpha,\bar m_\alpha)\to (-m_\alpha,-\bar m_\alpha)$, generating new solutions. So to one solution presenting $n$ condensed modes, there correspond $2^n$ distinct solutions.

\subsection{Phase Diagram}\label{sec:phasediag}

The fixed point equations (\ref{eq:mf1},~\ref{eq:mf2}) can be solved numerically to tell us how the variables condensate on the SVD modes within each equilibrium state of the distribution
and whether a spin-glass or a ferromagnetic phase is present. The important point here is that
with $K$ finite and a non-degenerate spectrum the mode with highest singular value dominates the ferromagnetic phase.

In absence of bias ($\eta = \theta=0$) and once $1/\sigma$ is interpreted as
temperature and $w_\alpha/\sigma$ as ferromagnetic couplings, we get a phase diagram similar to that of the Sherrington-Kirkpatrick (SK) model
with three distinct phases (see Figure~\ref{fig:phasediag})
\begin{itemize}
\item a paramagnetic phase ($q=\bar q= m_\alpha = \bar m_\alpha = 0$) (P),
\item a ferromagnetic phase ($q,\bar q,m_\alpha,\bar m_\alpha \ne 0$) (F),
\item a spin glass phase ($q,\bar q\ne 0$; $m_\alpha = \bar m_\alpha = 0$) (SG).
\end{itemize}
In general, the lines separating the different phases correspond to second order phase transitions and can be obtained
by a stability analysis of the Hessian of the free energy. They are related to unstable modes of the linearized mean-field equations and correspond to an eigenvalue of the Hessian
becoming negative.

The (SG-P) line is obtained by looking at the Hessian in the $(q,\bar q)$ sector:
\[
H_{q\bar q} \eg_{m=0\atop q=0} -\frac{1}{2}\left[
\begin{matrix}
\sigma^2 & \frac{\sigma^4}{\sqrt{\kappa}} \\
\sqrt{\kappa}\sigma^4 & \sigma^2
\end{matrix}
\right]
\]
from what results that the spin glass phase develops when $\sigma\ge 1$\footnote{Note that in~\cite{Barra} a dependence $\sqrt{\kappa(1-\kappa)}$ 
$\left( \sqrt{\alpha(1-\alpha)} \text{in their notation} \right)$ is found. This dependence is  hidden in our definition of $\sigma^2$ giving $L=\sqrt{N_v N_h}$ times the variance of 
$r_{ij}$ instead of $N_v+N_h$ as in their case.}.
This transition line is understood tacking directly into account the spectral properties of the weight matrix. 
Classically, this is done with the help of the linearized TAP equations and exploiting the Marchenko-Pastur distribution~\cite{MePaVi}. 
In our context, the linearized TAP equations read
\[
\left[
\begin{matrix}
\mu\\
\nu
\end{matrix}\right] =
\left[
\begin{matrix}
-\sqrt{\kappa}\sigma^2 & W^T\\
W &-\frac{\sigma^2}{\sqrt{\kappa}}
\end{matrix}
\right]
\left[
\begin{matrix}
\mu\\
\nu
\end{matrix}\right]
\]
given the variance $\sigma^2/L$ of the weights in absence of dominant modes. Then we can show that the paramagnetic phase becomes unstable when the highest eigenvalue of the matrix on the rhs
is equal to $1$: if $\lambda$ is a singular value of $W$, the corresponding eigenvalues $\Lambda^{\pm}$ verify the relation
\[
\bigl(\frac{\Lambda^{\pm}}{\sqrt{\kappa}}\pm\sigma^2\bigr)\bigl(\sqrt{\kappa}\Lambda^{\pm}\pm\sigma^2\bigr) = \lambda^2.
\]
from which it is clear that the largest eigenvalue $\Lambda_{max}$ corresponds to the largest singular value $\lambda_{max}$. Owing to
the Marchenko-Pastur distribution $\lambda_{max}=\sigma^2(\sqrt{\kappa}+1)(1+1/\sqrt{\kappa})$ so $\Lambda_{max}$ verifies
\[
\bigl(\frac{\Lambda_{max}}{\sqrt{\kappa}}+\sigma^2\bigr)\bigl(\sqrt{\kappa}\Lambda_{max}+\sigma^2\bigr) = \sigma^2(\sqrt{\kappa}+1)\bigl(\frac{1}{\sqrt{\kappa}}+1\bigr).
\]
$\Lambda_{max} = 1$ is readily obtained for $\sigma^2=1$.

For the (F-SG) frontier we can look at the sector $(m_\alpha,\bar m_\alpha)$ corresponding to the emergence of a single mode $\alpha$ (written in the spin-glass phase):
\begin{align*}
H_{\alpha\alpha} &= \left[
\begin{matrix}
w_\alpha & w_\alpha^2\EE_{v,x}\Bigl[(v^\alpha)^2\sech^2\bigl(\bar h(x,v))\bigr)\Bigr] \\
w_\alpha^2\EE_{u,x}\Bigl[(u^\alpha)^2\sech^2\bigl(h(x,u)\Bigr)\Bigr] & w_\alpha
\end{matrix}
\right] \\[0.2cm]
&\eg_{m_\alpha=0}
\left[
\begin{matrix}
w_\alpha & w_\alpha^2(1-q) \\
w_\alpha^2(1-\bar q) & w_\alpha
\end{matrix}\right]
\end{align*}
From this it is clear that
the first mode to become unstable is the mode $\alpha$ with highest singular value $w_\alpha$ and this occurs when $q$ and $\bar q$, solutions of (\ref{eq:mf1},\ref{eq:mf2}),
verify
\[
(1-q)(1-\bar q) w_\alpha^2 = 1.
\]
As for the SK model, this line appears to be well below the de Almeida-Thouless (AT) line, which is the line above which the RS solution is stable (see Figure~\ref{fig:phasediag}, and Appendix~\ref{app:AT} for the computation of the AT line). This means that in principle
a replica symmetry breaking treatment would be necessary to properly separate the two phases. However, we will leave aside this point as we are mainly interested in the practical aspects, namely
the ability of the RBM to learn arbitrary data, and so we are mostly concerned with the ferromagnetic phase above the AT line.

For the (P-F) line we consider the same sector of the Hessian but now written in the paramagnetic phase, 
i.e. setting $q=0$ in the above equation, and this simply yields the emergence of the single mode $\alpha$ for $w_\alpha=1$.

Note that all of this is independent on how the statistical average over $u$ and $v$ is performed.
Instead, as we shall see later on, the way of averaging influences the nature of the ferromagnetic phase.

\begin{figure}[ht]
\centering{
\includegraphics[width=\textwidth]{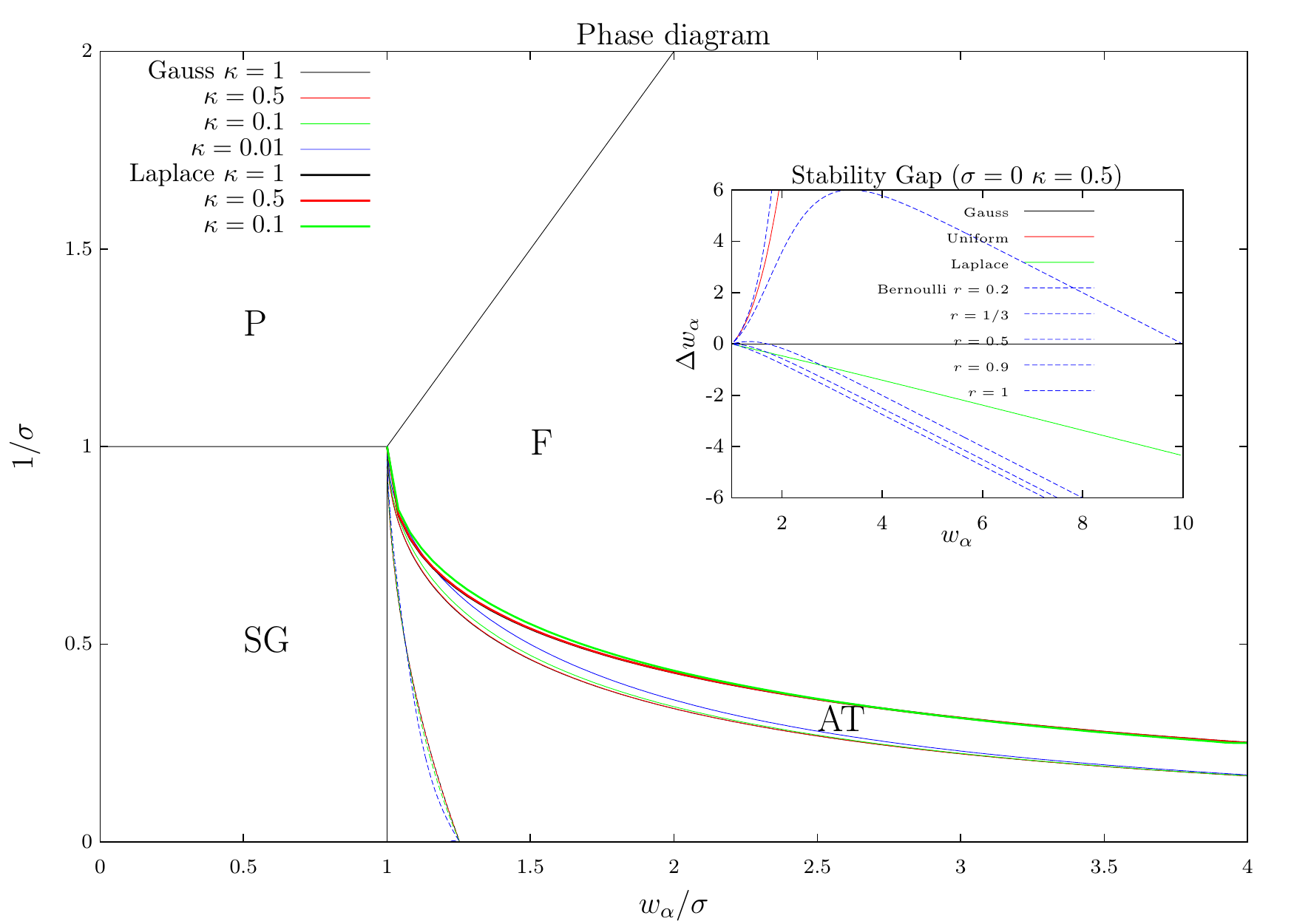}}
\caption{Phase diagram in absence of bias and with a finite number of modes, with Gaussian and Laplace distributions for $u$ and $v$.
The dotted line separates the spin glass phase from the ferromagnetic phase under the RS hypothesis.
The RS phase is unstable below the AT line. The influence of $\kappa$ on the AT and SG-F lines is shown.
In all cases, the hypothetical SG-F line lies well inside the broken RS phase. Inset: high temperature ($\sigma=0$) 
stability gap $\Delta w_\alpha$ corresponding to a fixed point associated to a mode $\beta$, expressed as a function of $w_\alpha$ and considering various distributions.
}\label{fig:phasediag}
\end{figure}

Regarding the stability of the RS solution, the computation of the AT line reported in Appendix~\ref{app:AT} is similar to the classical one
made for the SK model, though slightly more involved. In fact we were not able to fully characterize, in replica space, all the possible
instabilities of the Hessian which would potentially lead to a breakdown of the replica symmetry.
At least the one responsible for  the ordinary SK model RS breakdown has a counterpart in the bipartite
case that gives a necessary condition for the stability of the RS solution:
\[
\frac{1}{\sigma^2} > \sqrt{\EE_{x,u}\Bigl(\sech^4\bigl(h(x,u)\bigr)\Bigr)\EE_{x,v}\Bigl(\sech^4\bigl(\bar h(x,v)\bigr)\Bigr)},
\]
For $\kappa=1$ the terms below the radical 
become identical and the condition reduces to the one of the SK model, except for the $u$ averages which are not
present in the SK model. In Figure~\ref{fig:phasediag}, is shown the influence on the phase diagram of the value of $\kappa$ and of the type of average
made on $u$ and $v$.

\subsection{Nature of the Ferromagnetic phase}\label{sec:ferro_phase}

Some subtleties arise when considering various ways of averaging over the components of the singular vectors.
In~\cite{Agliari,TuMo} is emphasized the importance for networks to be able to reproduce compositional states
structured by combination of hidden variables. In our representation, we don't have direct access to this property
but, in some sense, to the dual one, which is given by states corresponding to combinations of modes.
Their presence and their structure are rather sensitive to the way the average over $u$ and $v$ is performed.
In this respect the case in which $\bm{u}^\alpha$ and $\bm{v}^\alpha$ have i.i.d. Gaussian components is very special:
all fixed points associated to dominated modes can be shown to be unstable and fixed points associated to combinations of modes are not allowed.
To see this, first notice that in such a case the magnetization's part of the saddle point equations (\ref{eq:mf1},\ref{eq:mf2}) read
\begin{align}
m_\alpha &= (w_\alpha\bar m_\alpha-\theta_\alpha)
(1-q)\label{eq:gmf1} \\[0.2cm]
\bar m_\alpha &= (w_\alpha m_\alpha-\eta_\alpha) (1-\bar q).\label{eq:gmf2}
\end{align}
Since the role of the bias is mainly to introduce some asymmetry between otherwise degenerated fixed points obtained by sign reversal of at least 
one pair $(m_\alpha,\bar m_\alpha)$, let us analyze the situation without fields, i.e. by setting $\eta=\theta=0$. We immediately see that as long as the 
singular values are non degenerate, only one single mode may condense at a time. Indeed if mode $\alpha$ condenses we necessarily have 
\[
w_\alpha^2(1-q)(1-\bar q) = 1,
\]
and this can be verified only by one mode at a time. Looking at the stability of the fixed points, we see that only the fixed point associated to
the largest singular value is actually stable (details reported after the introduction of lemma \ref{lem:p_star_var}).

For other distributions like uniform Bernoulli or Laplace, instead, stable fixed points associated to many different single modes or
combinations of modes can exist and contribute to the thermodynamics. In order to
analyze this question in more general terms we first rewrite the mean-field equations in a convenient way which require some preliminary remarks.
We restrict the discussion to i.i.d. variables so that we can consider single variable distributions. Joint distributions will be distinguished 
from single variable distributions by the use of bold: $\bu=\{u^\alpha,\alpha=1,\ldots,K\}$, $K$ being the (finite) number of modes
susceptible of condensing.

Given the distribution $p$ and assuming it to be even, we define a related distribution $p^\star$ attached to mode $\alpha$:
\begin{equation}\label{def:pualpha}
p^\star(u) \egaldef -\int_{-\infty}^{u} x p(x)dx 
= \int_{\vert u\vert}^\infty x p(x)dx,
\end{equation}
This distribution has some useful properties.
\begin{lem}
Given that  $p$ is centered with unit variance and kurtosis $\kappa_u$, 
$p^\star$ is a centered probability distribution with variance 
\[
\int_{-\infty}^\infty u^2 p^\star(u)du  = \frac{\kappa_u}{3}.
\]  
\label{lem:p_star_var}
\end{lem}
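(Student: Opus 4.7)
My plan is to verify in turn that $p^\star$ is nonnegative, that it integrates to one, that it is centered, and finally that its variance equals $\kappa_u/3$. All four reduce to elementary manipulations built on the same trick: use the alternative representation $p^\star(u) = \int_{|u|}^\infty x\, p(x)\, dx$ and swap the order of integration (Fubini).

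First, nonnegativity is immediate from this second form, since for $u$ such that $p^\star(u)\neq 0$ the inner integral runs over $x \ge |u| \ge 0$, where $x p(x) \ge 0$. Next, for normalization I would write
\begin{equation*}
\int_{-\infty}^\infty p^\star(u)\, du = \int_{-\infty}^\infty \Bigl(\int_{|u|}^\infty x\, p(x)\, dx\Bigr) du
= \int_0^\infty x\, p(x) \Bigl(\int_{-x}^{x} du\Bigr) dx = \int_0^\infty 2x^2 p(x)\, dx,
\end{equation*}
which equals $\int_{-\infty}^\infty x^2 p(x)\, dx = 1$ by evenness and the unit-variance hypothesis. For centering, I note that $p^\star(u)$ depends only on $|u|$, so $p^\star$ is itself an even density and $\int u\, p^\star(u)\, du = 0$.

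The only genuine computation is the variance. Using Fubini once more,
\begin{equation*}
\int_{-\infty}^\infty u^2 p^\star(u)\, du = \int_0^\infty x\, p(x) \Bigl(\int_{-x}^x u^2\, du\Bigr) dx
= \frac{2}{3}\int_0^\infty x^4 p(x)\, dx = \frac{1}{3}\int_{-\infty}^\infty x^4 p(x)\, dx.
\end{equation*}
Since $p$ is centered with unit variance, the kurtosis is $\kappa_u = \int x^4 p(x)\, dx$, so the right-hand side is $\kappa_u/3$.

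There is no real obstacle: the only subtle point is writing the domain of integration correctly when swapping, i.e.\ recognizing that $\{(u,x) : x\ge |u|\} = \{(u,x) : x\ge 0,\ -x\le u\le x\}$. Once that is in place, the stated unit variance gives normalization and the definition of $\kappa_u$ gives the variance, with centering following for free from the evenness of $p$.
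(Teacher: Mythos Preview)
Your proof is correct and follows essentially the same route as the paper: both rely on the representation $p^\star(u)=\int_{|u|}^\infty x\,p(x)\,dx$ and a Fubini swap to reduce moments of $p^\star$ to higher moments of $p$. The paper does this once for a general even moment $n$ (obtaining $\int u^n p^\star(u)\,du = \frac{1}{n+1}\int x^{n+2}p(x)\,dx$) and then reads off $n=0$ and $n=2$, whereas you treat those two cases separately; the content is the same.
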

\begin{proof}
Consider the moments of $p^\star$. For $n$ odd they vanish while for $n$ even they read:
\begin{align*}
\int_{-\infty}^{+\infty} u^n p^\star(u)du  &= 2\int_0^\infty u^n p^\star(u)du \\[0.2cm] 
&= 2\int_0^\infty du u^n \int_{u}^\infty x p(x)dx \\[0.2cm]
&= 2\int_0^\infty x p(x)dx \int_0^{x} u^n du\\[0.2cm]
&= \frac{1}{n+1}\int_{-\infty}^\infty x^{n+2}p(x)dx , 
\end{align*}
	i.e. the $n_{th}$ even moments of $p^\star$ relate to moments of order $n+2$ of $p$. The lemma then follows from the fact that $p$ has unit variance.
\end{proof}
In this respect, the Gaussian averaging is special because we have $\kappa_u=3$ and $p^\star = p$. Then the mean-field equations (\ref{eq:mf1},\ref{eq:mf2}) corresponding to the magnetizations can be rewritten 
in a form similar to (\ref{eq:gmf1},\ref{eq:gmf2}) by introducing the variables $q_\alpha$ and $\bar q_\alpha$:
\begin{align}
m_\alpha &= (w_\alpha\bar m_\alpha - \theta_\alpha)(1-q_\alpha),\label{eq:mf1_alpha}\\[0.2cm]
\bar m_\alpha &= (w_\alpha m_\alpha - \eta_\alpha)(1-\bar q_\alpha),\label{eq:mf2_alpha}
\end{align}
with
\begin{align}
q_\alpha &= \int dx \frac{e^{-x^2/2}}{\sqrt{2\pi}} d\bv p_\alpha(\bv) \tanh^2
\Bigl(\kappa^{-\frac{1}{4}}\bigl(\sigma\sqrt{\bar q}x+\sum_\gamma (w_\gamma\bar m_\gamma-\theta_\gamma) v^\gamma\bigr)\Bigr),\label{eq:mf3_alpha}\\[0.2cm]
\bar q_\alpha &= \int dx \frac{e^{-x^2/2}}{\sqrt{2\pi}} d\bu p_\alpha(\bu) \tanh^2
\Bigl(\kappa^{\frac{1}{4}}\bigl(\sigma\sqrt{q}x+\sum_\gamma (w_\gamma m_\gamma-\eta_\gamma) u^\gamma\bigr)\Bigr),\label{eq:mf4_alpha}
\end{align}
where
\[
p_\alpha(\bu) \egaldef p^\star(u^\alpha)\prod_{\beta\ne\alpha}p(u^\beta).
\]
This rewriting will prove very useful also in the next section when analyzing the learning dynamics. 

Let us now assume, in absence of bias, a non-degenerate fixed point associated to some given mode $\beta$ with 
finite $(m_\beta,\bar m_\beta)$ and $m_\alpha=\bar m_\alpha = 0,\forall \alpha\ne \beta$. The fixed point equation imposes the relation
\begin{equation}\label{eq:wqbq}
w_\beta = \frac{1}{\sqrt{(1-q_\beta)(1-\bar q_\beta)}} \egaldef w(q_\beta,\bar q_\beta).
\end{equation}
The stability of such a fixed point with respect to any other mode $\alpha$ is related to the positive definiteness of 
the following block of the Hessian 
\[
H_{\alpha\alpha} = \left[
\begin{matrix}
w_\alpha & w_\alpha^2\EE_{v,x}\Bigl[(v^\alpha)^2\sech^2\bigl(\bar h(x,v)\bigr)\Bigr] \\[0.2cm]
w_\alpha^2\EE_{u,x}\Bigl[(u^\alpha)^2\sech^2\bigl(h(x,u)\bigr)\Bigr] & w_\alpha
\end{matrix}
\right]
\]
with, in the present case
\[
h(x,u) = \kappa^{\frac{1}{4}}\bigl(\sigma\sqrt{q}x+w_\beta\bar m_\beta u^\beta\bigr)\qquad\text{and}\qquad
\bar h(x,v) = \kappa^{-\frac{1}{4}}\bigl(\sigma\sqrt{\bar q}x+w_\beta\bar m_\beta v^\beta\bigr),
\]
This reduces to
\[
H_{\alpha\alpha} =
\left[
\begin{matrix}
w_\alpha & w_\alpha^2(1-q) \\[0.2cm]
w_\alpha^2(1-\bar q) & w_\alpha
\end{matrix}\right].
\]
Therefore for the Gaussian averaging case, since $q_\beta=q$, $\bar q_\beta = \bar q$ and given (\ref{eq:wqbq}),
we necessarily have
\[
1 - (1-q)(1-\bar q) w_\alpha^2 = 1-\frac{w_\alpha^2}{w_\beta^2} < 0\qquad \text{for}\qquad w_\alpha > w_\beta,
\]
i.e. the Hessian has negative eigenvalues.
This means that if the mode $\beta$ is dominated by another mode $\alpha$, the magnetization 
$(m_\alpha,\bar m_\alpha)$ will develop until $(1-q)(1-\bar q) w_\alpha^2=1$, while $m_\beta$ will vanish.

For the general case of i.i.d. variables, assuming $u^\alpha$  and $v^\alpha$ obey the same distribution
$p$, let $F$ and $F_\alpha$ be the cumulative distributions associated respectively
to $p$ and $p_\alpha$
\begin{align*}
F(u) &\egaldef \int_{-\infty}^{u}p(x)dx \\[0.2cm]
F_\alpha(u) &\egaldef \int d\bu\ \theta(u-u^\alpha)p_\alpha(\bu)dx = -\int_{\infty}^{u}du^\alpha\int_{-\infty}^{u^\alpha} x p(x)dx.
\end{align*}
Given the values of $(q,\bar q)$ obtained from the fixed point associated to mode $\beta$,
we have the following property:
\begin{prop}
If
\begin{align*}
(i)\ F_\beta(u) &< F(u),\qquad\forall u\in{\mathbb R^+}\qquad \text{then}\qquad q_\beta > q\qquad \text{and}\qquad \bar q_\beta> \bar q,\\[0.2cm]
(ii)\ F_\beta(u) &> F(u),\qquad\forall u\in{\mathbb R^+}\qquad \text{then}\qquad q_\beta < q\qquad \text{and}\qquad \bar q_\beta< \bar q,
\end{align*}
which in turn implies
\[
w(q,\bar q)  < w_\beta\  (i) \qquad \text{and}\qquad w(q,\bar q) > w_\beta\ (ii)
\]
with
\[
w(q,\bar q) \egaldef \frac{1}{\sqrt{(1-q)(1-\bar q)}}.
\]
\end{prop}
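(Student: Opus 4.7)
The plan is to reduce the proposition to a one-dimensional stochastic dominance computation. At a fixed point of mode $\beta$ with all other magnetizations vanishing and no bias, the arguments of $\tanh^2$ in (\ref{eq:mf3_alpha})--(\ref{eq:mf4_alpha}) depend only on $v^\beta$ (resp.\ $u^\beta$); all components other than $v^\beta$ carry the marginal $p$ in $p_\beta$ and integrate out trivially, the sole difference between $q_\beta$ and $q$ being that $v^\beta$ is weighted by $p^\star$ in the former and by $p$ in the latter. One thus gets
\[
q_\beta - q = \int_{-\infty}^{+\infty} \bigl[p^\star(v) - p(v)\bigr]\, g(v)\, dv,
\qquad
g(v) \egaldef \EE_x\Bigl[\tanh^2\bigl(\kappa^{-\frac{1}{4}}(\sigma\sqrt{\bar q}\,x + w_\beta \bar m_\beta v)\bigr)\Bigr],
\]
with $x\sim\mathcal{N}(0,1)$, and symmetrically for $\bar q_\beta - \bar q$ with $u\leftrightarrow v$ and $(q,\bar m_\beta)\leftrightarrow(\bar q,m_\beta)$.

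Two properties of $g$ are then required. First, $g$ is even: the change of variable $x\to -x$ combined with the evenness of $\tanh^2$ yields $g(-v)=g(v)$. Second, $g$ is strictly increasing on $\mathbb{R}^+$. Setting $a=\kappa^{-1/4}\sigma\sqrt{\bar q}$ and $b=\kappa^{-1/4} w_\beta |\bar m_\beta|$ (the sign of $\bar m_\beta$ is immaterial by the evenness just noted), differentiation under the integral, followed by splitting the $z$-integral over $z>0$ and $z<0$ with the oddness of $\tanh\sech^2$, yields
\[
g'(v) = 2b \int_{0}^{\infty} \tanh(z)\sech^2(z)\,\bigl[f_a(z-bv) - f_a(z+bv)\bigr]\, dz,
\]
where $f_a$ is the centered Gaussian density of variance $a^2$. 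For $v>0$ and $z>0$ the bracket is strictly positive (the shifted mode $bv>0$ is closer to $z$ than to $-z$) and $\tanh(z)\sech^2(z)>0$, so $g'(v)>0$.

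An integration by parts then converts the identity for $q_\beta-q$ into
\[
q_\beta - q = -\int_{-\infty}^{+\infty} \bigl[F_\beta(v) - F(v)\bigr]\, g'(v)\, dv.
\]
Because $p$ and $p^\star$ are even, $F_\beta(-v) - F(-v) = -\bigl(F_\beta(v) - F(v)\bigr)$, and $g'$ is odd, so the integrand is an even function of $v$ and the integral reduces to $2\int_0^\infty [F(v)-F_\beta(v)]\,g'(v)\,dv$. Under hypothesis (i), both factors are strictly positive on $\mathbb{R}^+$, hence $q_\beta>q$; the symmetric argument with $u\leftrightarrow v$ gives $\bar q_\beta>\bar q$. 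Case (ii) is identical with all inequalities reversed. Finally, $(q,\bar q)\mapsto w(q,\bar q)=1/\sqrt{(1-q)(1-\bar q)}$ is strictly increasing in each argument on $[0,1)$, and $w_\beta = w(q_\beta,\bar q_\beta)$ by (\ref{eq:wqbq}); so $q<q_\beta$ and $\bar q<\bar q_\beta$ yield $w(q,\bar q)<w_\beta$ in case (i), and symmetrically in case (ii).

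The main technical obstacle is the strict monotonicity of $g$ on $\mathbb{R}^+$: the integrand is not monotone in $v$ for fixed $x$, and one must exploit the symmetry of the Gaussian measure together with the oddness of $\tanh\sech^2$ to produce a manifestly positive expression. Once this is established, the rest is bookkeeping built on the evenness of $p,p^\star$ and the saddle-point identity $w_\beta^2(1-q_\beta)(1-\bar q_\beta)=1$.
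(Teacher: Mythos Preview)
Your proof is correct and follows the same route the paper sketches: integration by parts in the $v$ (resp.\ $u$) variable to convert the density comparison $p^\star$ vs.\ $p$ into the CDF comparison $F_\beta$ vs.\ $F$, then exploiting the sign of the derivative of the test function. The paper's one-line proof omits the key technical point you supply, namely that $g(v)=\EE_x[\tanh^2(ax+bv)]$ is strictly increasing on $\mathbb{R}^+$; your symmetrization of the $z$-integral against the oddness of $\tanh\sech^2$ and the evenness of the Gaussian kernel is the clean way to see this, and makes the argument rigorous where the paper only says ``straightforward by parts integration.''
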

\begin{proof}
This is obtained by straightforward by parts integration respectively over $u$ and $v$ in equations~(\ref{eq:mf1},\ref{eq:mf2}),
relative to magnetizations.
\end{proof}
In other words if  $F_\beta$ dominates  $F$ on ${\mathbb R}^+$
then there is a positive stability gap defined as
\begin{equation}
\Delta w_\beta\egaldef w(q,\bar q)-w_\beta \label{eq:gap}
\end{equation}
such that
there is a non-empty range
for higher values of $w_\alpha\in [w_\beta,w(q,\bar q)[$ for which
the fixed point associated to mode $\beta$ corresponds to a local minimum of the free energy.
Note that property (i) [resp. (ii)] is analogous (in the sense that it implies it) to $p_\beta$ having a larger [resp. smaller] variance than $p$, i.e. $\kappa_u >3$
[resp. $\kappa_u <3$].
Therefore distributions $p$ with negative relative kurtosis ($\kappa_u-3$) will tend to favor the presence of metastable states, while the situation will tend to
be more complex for probabilities with positive relative kurtosis. Indeed, in the latter case the fixed point associated to the highest
mode $\alpha_{max}$ might not correspond to a stable state if lower modes in the range $[w(q,\bar q),w_{\alpha_{max}}[$ are present,
and fixed points associated to combinations of modes have to be considered. Note that in contrary with the Gaussian case,
this can happen because $q_\alpha$ is different for each mode and therefore more flexibility is offered by equations~(\ref{eq:mf1_alpha},\ref{eq:mf2_alpha})
than from equations (\ref{eq:gmf1},\ref{eq:gmf2}).

Let us give some examples. Denote by $\gamma_u \egaldef \kappa_u-3$ the relative kurtosis.
As already said the Gaussian distribution is a special case with $\gamma_u=0$. In addition, for instance for
$p$ corresponding to Bernoulli, Uniform or Laplace, we have the following properties illustrated in the inset of Figure~\ref{fig:phasediag}:
\begin{itemize}
\item Bernoulli ($\gamma_u = -2$):
\begin{align*}
p(u) &= \frac{1}{2}\bigl(\delta(u+1)+\delta(u-1)\bigr),\qquad  F(u) = \frac{1}{2}\bigl(\theta(u+1)+\theta(u-1)\bigr) \\
p_\alpha(u) &= \frac{1}{2}\theta(1-u^2),\qquad
F_\alpha(u) = \frac{1}{2}\theta(1-u^2)(u+1)+\theta(u-1)
\end{align*}
then $F_\alpha(u) > F(u)$ for $u>0$, yielding a positive stability gap.
\item Uniform ($\gamma_u = -6/5$):
\begin{align*}
p(u) &= \frac{1}{2\sqrt{3}}\theta(3-u^2),\qquad  F(u) = \frac{1}{2\sqrt{3}}\theta(3-u^2)(u+\sqrt{3})+\theta(u-\sqrt{3}) \\
p_\alpha(u) &= \frac{1}{4\sqrt{3}}\theta(3-u^2)(3-u^2),\qquad
F_\alpha(u) = \frac{1}{4\sqrt{3}}\theta(3-u^2)(3u-\frac{u^3}{3}+2\sqrt{3})+\theta(u-\sqrt{3}).
\end{align*}
It can be verified that $F_\alpha(u)> F(u)$ for $u>0$, yielding again a positive stability gap.
\item Laplace ($\gamma_u = 3$):
\begin{align*}
p(u) &= \frac{1}{\sqrt{2}}e^{-\sqrt{2}\vert u\vert},\qquad  F(u) = \frac{1}{2}+\frac{u}{2\vert u\vert}\bigl(1-e^{-\sqrt{2}\vert u\vert}\bigr)\\
p_\alpha(u) &= \frac{1}{2}\bigl(\vert u\vert+\frac{1}{\sqrt{2}}\bigr)e^{-\sqrt{2}\vert u\vert},\qquad
F_\alpha(u) = F(u)-\frac{u}{2\sqrt{2}}e^{-\sqrt{2}\vert u\vert}.
\end{align*}
Here we have $F_\alpha(u) < F(u)$ for $u>0$, yielding a negative stability gap.
\end{itemize}
These three examples fall either in condition (i) or (ii), with a stability gap $\Delta w_\beta$ that
is either always positive or always negative, independently of $w_\beta$.
We can also provide examples for which the stability condition may vary with $w_\beta$. Consider for instance a sparse Bernoulli
distribution, with $r\in[0,1]$ a sparsity parameter:
\[
p(u) = \frac{r}{2}\bigl(\delta(u+\frac{1}{\sqrt{r}})+\delta(u-\frac{1}{\sqrt{r}})\bigr)
+(1-r)\delta(u).
\]
The relative kurtosis is in this case
\[
\gamma_u(r) = \frac{1}{r}-3.
\]
Looking at $F(u)$ and $F_\alpha(u)$ it is seen that both conditions $(i)$ and $(ii)$ are not fulfilled,
except for $r=1$ which corresponds to the plain Bernoulli case. As we see in the inset of Figure~\ref{fig:phasediag},
for $r<1/3$ the stability gap is always negative, meaning that a unimodal ferromagnetic phase is not stable,
and it is replaced by a compositional ferromagnetic phase at all temperatures. Instead, for $r>1/3$ and at sufficiently high temperature
(low $w_\alpha$) the single mode fixed point dominate the ferromagnetic phase.

\paragraph{Laplace distribution:} let us look at the properties of the phase diagram in the case of singular vectors' components being Laplace i.i.d., case in which a negative stability gap is
expected and it may lead to a compositional phase. For this we need the expression for
a sum of Laplace variables to compute the averages involved in~(\ref{eq:mf1},\ref{eq:mf2}).
For this purpose, we define the following distributions:
\begin{align*}
f(s) &= \int\prod_\gamma du^\gamma\frac{\lambda_\gamma}{2} e^{-\lambda_\gamma  \vert u^\gamma\vert}\ \delta(s-\sum_\gamma u^\gamma),\\[0.2cm]
g_\alpha(s) &= \int du^\alpha \frac{\lambda_\alpha}{4}(\lambda_\alpha\vert u^\alpha\vert+1)e^{-\lambda_\alpha\vert u^\alpha\vert}
\prod_{\gamma\ne\alpha} du^\gamma\frac{\lambda_\gamma}{2} e^{-\lambda_\gamma  \vert u^\gamma\vert}\ \delta(s-\sum_\gamma u^\gamma).
\end{align*}
Their Laplace transform upon decomposing into partial fractions reads:
\[
\tilde f(\omega) = \prod_\gamma \frac{\lambda_\gamma^2}{\lambda_\gamma^2-\omega^2}
= \sum_\gamma C_\gamma \frac{\lambda_\gamma^2}{\lambda_\gamma^2-\omega^2}
\]
and
\begin{align*}
\tilde g_\alpha(\omega) &= \frac{\lambda_\alpha^2}{\lambda_\alpha^2-\omega^2}
\prod_\gamma \frac{\lambda_\gamma^2}{\lambda_\gamma^2-\omega^2}\\[0.2cm]
&= C_\alpha\frac{\lambda_\alpha^4}{(\lambda_\alpha^2-\omega^2)^2}+
\sum_{\gamma\ne\alpha}C_\gamma\frac{\lambda_\gamma^2\lambda_\alpha^2}{\lambda_\alpha^2-\lambda_\gamma^2}
\Bigl(\frac{1}{\lambda_\gamma^2-\omega^2}-\frac{1}{\lambda_\alpha^2-\omega^2}\Bigr).
\end{align*}
where
\[
C_\gamma \egaldef \prod_{\delta\ne\gamma}\frac{\lambda_\delta^2}{\lambda_\delta^2-\lambda_\gamma^2}.
\]
From these decompositions we immediately identify
\begin{align*}
f(s) &= \frac{1}{2}\sum_\gamma C_\gamma \lambda_\gamma e^{-\lambda_\gamma\vert s\vert},\\[0.2cm]
g_\alpha(s) &= \frac{\lambda_\alpha C_\alpha}{4}(\lambda_\alpha\vert s\vert+1) e^{-\lambda_\alpha\vert s\vert} +
\frac{1}{2}\sum_{\gamma\ne\alpha} C_\gamma\frac{\lambda_\gamma\lambda_\alpha}{\lambda_\alpha^2-\lambda_\gamma^2}
\bigl(\lambda_\alpha e^{-\lambda_\gamma\vert s\vert}-\lambda_\gamma e^{-\lambda_\alpha\vert s\vert}\bigr).
\end{align*}
This results in the following decomposition of the EA parameters:
\begin{align}
q &= \int dxds\frac{e^{-\sqrt{2}\vert s\vert-x^2/2}}{2\sqrt{\pi}}\sum_{\gamma}C_\gamma[\bar m]\tanh^2\bigl(\bar h_\gamma(x,s)\bigr)\label{eq:q_exp}\\[0.2cm]
q_\alpha &= \int dxds\frac{e^{-\sqrt{2}\vert s\vert-x^2/2}}{2\sqrt{\pi}}\Bigl[\frac{1}{\sqrt{2}}(\vert s\vert+\frac{1}{\sqrt{2}})
C_\alpha[\bar m] \tanh^2\bigl(\bar h_\alpha(x,s)\bigr)\\[0.2cm]
&+\sum_{\gamma\ne\alpha} C_\gamma[\bar m]
\frac{(w_\gamma\bar m_\gamma-\theta_\gamma)^2\tanh^2\bigl(\bar h_\gamma(x,s)\bigr)-
(w_\alpha\bar m_\alpha-\theta_\alpha)^2\tanh^2\bigl(\bar h_\alpha(x,s)\bigr)}
{(w_\gamma\bar m_\gamma-\theta_\gamma)^2-(w_\alpha\bar m_\alpha-\theta_\alpha)^2}
\Bigr]\label{eq:qalpha_exp}
\end{align}
with
\[
\bar h_\gamma(x,s) \egaldef  \kappa^{-\frac{1}{4}}\bigl(\sigma\sqrt{\bar q}x+(w_\gamma\bar m_\gamma-\theta_\gamma)s\bigr)
\]
and
\[
C_\gamma[\bar m] \egaldef \prod_{\delta\ne\gamma}\frac{(w_\gamma\bar m_\gamma-\theta_\gamma)^2}{(w_\gamma\bar m_\gamma-\theta_\gamma)^2
-(w_\delta\bar m_\delta-\theta_\delta)^2}.
\]
This allows for an efficient resolution of the mean-field equations~(\ref{eq:mf1},\ref{eq:mf2},\ref{eq:mf1_alpha},\ref{eq:mf2_alpha}), which let us observe the appearance of a purely compositional phase in the ferromagnetic domain
when the modes at the top of the spectrum get close enough. In order to characterize this phase,
we consider the stability gap $\Delta^{(n)}(w_\alpha)$ for which the range $[w_a-\Delta^{(n)}(w_\alpha),w_a]$
lies below the highest mode $w_a$, such that the ferromagnetic states correspond to the condensation of $n$ distinct modes present in
this interval, including the highest.

In addition, this will prove useful when analyzing the learning dynamics described in the next section.

\section{Learning dynamics of the RBM}\label{sec:rbmdyn}
\subsection{Learning dynamics in the thermodynamic limit}\label{sec:equations}
A mean field analysis of the learning dynamics has been proposed in~\cite{DeFiFu}, in the form of phenomenological equations obtained
after averaging over some parameters of the RBM, i.e. by choosing a well defined statistical ensemble of RBMs and using self-averaging properties
in the thermodynamic limit.
Here we rederive these equations, we add some details
and then explore their properties in the light of the preceding section.
First we project the gradient ascent equations~(\ref{eq:cd1}-\ref{eq:cd_theta})
onto the bases $\{u_\alpha(t)\in{\mathbb R}^{N_v}\}$ and $\{v_\alpha(t)\in{\mathbb R}^{N_h}\}$ defined by the SVD of $W$.
Discarding stochastic fluctuations usually inherent to the learning procedure and letting the learning rate $\gamma\to 0$,
the continuous version of~(\ref{eq:cd1}-\ref{eq:cd_theta}) can be recast as follows:
\begin{align}
\frac{1}{L}\Bigl(\frac{dW}{dt}\Bigr)_{\alpha\beta} &= \langle s_\alpha\sigma_\beta \rangle_{\rm Data}-\langle s_\alpha\sigma_\beta\rangle_{\rm RBM},\label{eq:csa1}\\[0.2cm]
\frac{1}{\sqrt{L}}\Bigl(\frac{d\eta}{dt}\Bigr)_{\alpha} &= \langle s_\alpha\rangle_{\rm RBM}-\langle s_\alpha\rangle_{\rm Data},\label{eq:csa2}\\[0.2cm]
\frac{1}{\sqrt{L}}\Bigl(\frac{d\theta}{dt}\Bigr)_{\alpha} &= \langle \sigma_\alpha\rangle_{\rm RBM}-\langle \sigma_\alpha\rangle_{\rm Data},\label{eq:csa3}
\end{align}
with $s_\alpha$ and $\sigma_\alpha$ given in~(\ref{eq:salpha}).
We also have
\begin{align*}
\left( \frac{dW}{dt} \right)_{\alpha \beta} &= \delta_{\alpha,\beta}\frac{dw_\alpha}{dt}+
(1-\delta_{\alpha,\beta})\Bigl(w_\beta(t)\Omega_{\beta\alpha}^v(t)+w_\alpha(t)\Omega_{\alpha\beta}^h\Bigr)\\[0.2cm]
\frac{1}{\sqrt{L}}\left( \frac{d\eta}{dt} \right)_{\alpha} &= \frac{d\eta_\alpha}{dt}-\sum_\beta \Omega_{\alpha\beta}^v\eta_\beta\\[0.2cm]
\frac{1}{\sqrt{L}}\left( \frac{d\theta}{dt} \right)_{\alpha} &= \frac{d\theta_\alpha}{dt}-\sum_\beta \Omega_{\alpha\beta}^h\theta_\beta
\end{align*}
where
\begin{align*}
\Omega_{\alpha \beta}^v (t) &= -\Omega_{\beta\alpha}^v \egaldef \frac{d\bm{u}^{\alpha,T}}{dt} \bm{u}^\beta  \\
\Omega_{\alpha \beta}^h (t)&= -\Omega_{\beta\alpha}^h \egaldef \frac{d\bm{v}^{\alpha,T}}{dt} \bm{v}^\beta
\end{align*}
By eliminating $\left( \frac{dw}{dt} \right)_{\alpha \beta}$, $\Bigl(\frac{d\eta}{dt}\Bigr)_{\alpha}$ and $\Bigl(\frac{d\theta}{dt}\Bigr)_{\alpha}$
we get the following set of dynamical equations:
\begin{align}
\frac{1}{L}\frac{dw_\alpha}{dt} &= \langle s_\alpha\sigma_\alpha \rangle_{\rm Data}-\langle s_\alpha\sigma_\alpha\rangle_{\rm RBM}\label{eq:dyn_w}\\[0.2cm]
\frac{d\eta_\alpha}{dt} &= \langle s_\alpha\rangle_{\rm RBM}-\langle s_\alpha\rangle_{\rm Data}+\sum_\beta \Omega_{\alpha\beta}^v\eta_\beta\label{eq:dyn_eta}\\[0.2cm]
\frac{d\theta_\alpha}{dt} &= \langle \sigma_\alpha\rangle_{\rm RBM}-\langle \sigma_\alpha\rangle_{\rm Data}+\sum_\beta \Omega_{\alpha\beta}^h\theta_\beta\label{eq:dyn_theta}
\end{align}
along with the infinitesimal rotation generators of the left and right singular vectors
\begin{align}
\Omega_{\alpha \beta}^v (t) &= -\frac{1}{w_\alpha+w_\beta}\left( \frac{dW}{dt} \right)_{\alpha \beta}^{\rm A} + \frac{1}{w_\alpha - w_\beta}\left( \frac{dW}{dt} \right)_{\alpha \beta}^{\rm S} \label{eq:dyn_rotv} \\[0.2cm]
\Omega_{\alpha \beta}^h (t)&=  \frac{1}{w_\alpha+w_\beta}\left( \frac{dW}{dt} \right)_{\alpha \beta}^{\rm A} + \frac{1}{w_\alpha - w_\beta}\left( \frac{dW}{dt} \right)_{\alpha \beta}^{\rm S} \label{eq:dyn_roth}
\end{align}
where
\[
\left( \frac{dW}{dt} \right)_{\alpha \beta}^{\rm A,S} \egaldef \frac{1}{2}
\Bigl(\langle s_\alpha \sigma_\beta \rangle_{\rm Data}\pm\langle s_\beta \sigma_\alpha \rangle_{\rm Data}
\mp\langle s_\beta \sigma_\alpha \rangle_{\rm RBM} - \langle s_\alpha \sigma_\beta \rangle_{\rm RBM}\Bigr).
\]
The dynamics of learning is now expressed in the reference frame defined by the singular vectors of $W$.
The skew-symmetric rotation generators $\Omega_{\alpha\beta}^{v,h}(t)$ of
the basis vectors (induced by the dynamics) tell us how data rotate relatively to this frame.
Given the initial conditions, these help us keeping track of the representation of data in this frame.
Note that these equations become singular when some degeneracy occurs in $W$ because then the SVD is not uniquely defined.
Except from the numerical point of view, where some regularizations might be needed,
this does not constitute an issue. In fact only rotations among non-degenerate modes are meaningful, while
the rest corresponds to gauge degrees of freedom.

At this point our set of dynamical equations~(\ref{eq:dyn_w}-\ref{eq:dyn_roth}) is written in a general form.
Our goal is to find the typical trajectory of the RBM within a certain statistical ensemble.
For this reason, we make the hypothesis that the learning dynamics is 
represented by a trajectory in the space $\{w_\alpha(t),\eta_\alpha(t),\theta_\alpha(t),\Omega_{\alpha\beta}^{v,h}(t)\}$, while the specific realization of $u_i^\alpha$,
$v_j^\alpha$  and $r_{ij}$ in~(\ref{eq:wsvd}) can be considered irrelevant and only the way they are distributed is important. We are then allowed to perform an 
average over $u_i^\alpha$, $v_j^\alpha$ and $r_{ij}$ with respect to some simple distributions, as long as 
this average is correlated with the data. By this we mean that the components $s_\alpha$ of any given sample are kept fixed while averaging. In the end, what really matters are the strength and the rotation of the SVD modes, respectively determined by $w_\alpha(t)$ and $\Omega_{\alpha\beta}^{v,h}(t)$.
As a simplification and also by lack of understanding of what intrinsically drives their evolution, 
the distributions of $u_i^\alpha$ and $v_j^\alpha$ will be considered stationary in the sequel. Concerning $r_{ij}$,
we allow its variance $\sigma^2/L$ to vary with time in order to give a minimal description of how the MP bulk evolves during the 
learning. The detailed dynamics of $\sigma$ will be derived later in Section~\ref{sec:non-linear}.
\begin{figure}[ht]
	\centering
	\includegraphics[scale=0.45]{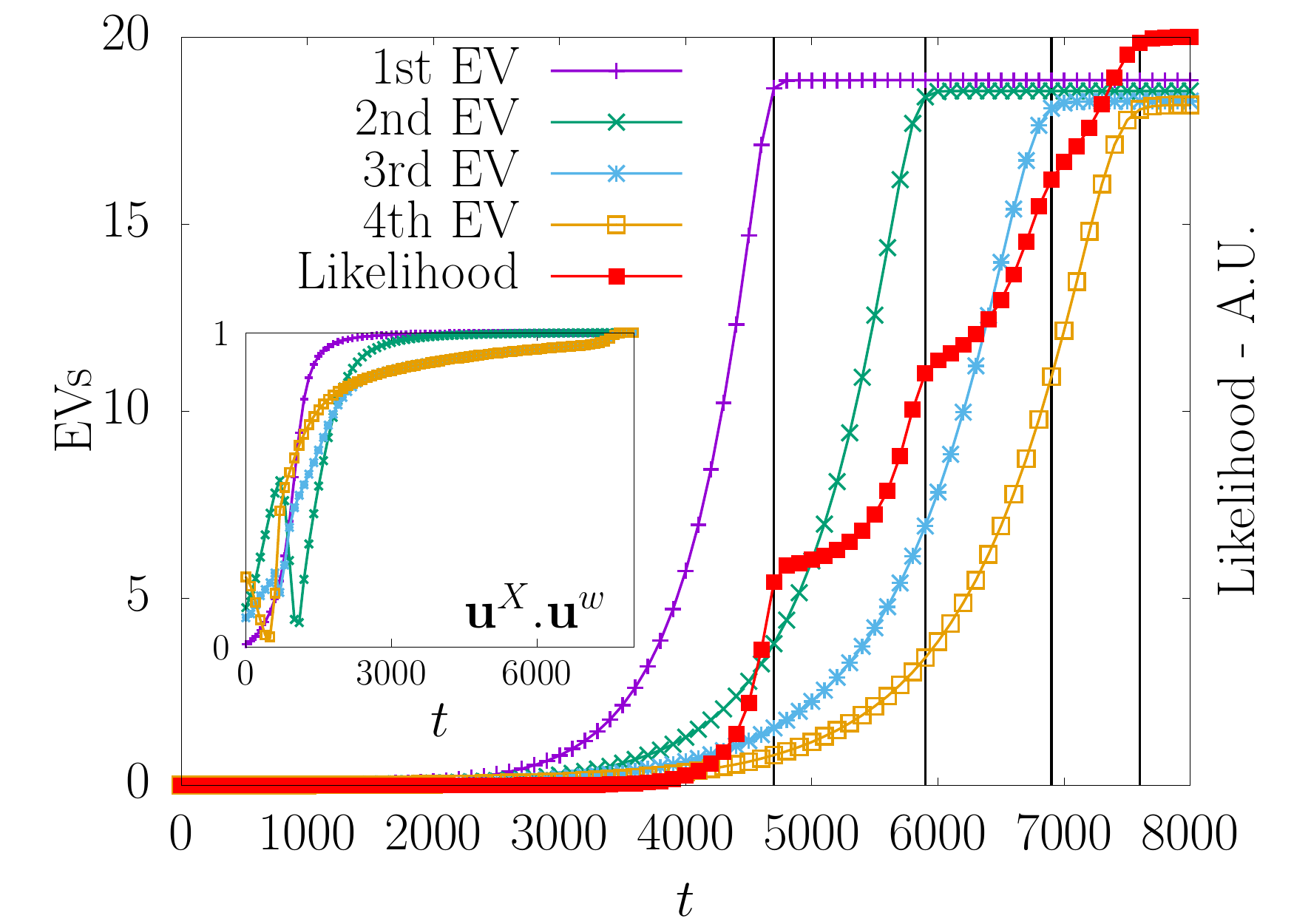}
	\caption{Time evolution of the eigenvalues and of the likelihood in the linear model. We observe very clearly how the different modes emerge from the bulk and how the likelihood increases with each learned eigenvalue. In the inset, the scalar product of the vectors $\bm{u}$ obtained from the SVD of the data and from the weights $\bm{w}$. The $\bm{u}$s of $\bm{w}$ are aligned with the SVD of the data at the end of the learning.}
	\label{fig_linRBM_dyn}
\end{figure}
Using the same notation of Section~\ref{sec:ferro_phase} and in particular using the rescaling  $v\sim \sqrt{N_h} v_i^\alpha$,
the empirical terms take the form:
\begin{align}
\langle \sigma_\alpha\rangle_{\rm Data} &= \langle (s_\alpha w_\alpha-\theta_\alpha)\bigl(1-q_\alpha[\bs]\bigr)\rangle_{\rm Data}\label{eq:empterm0}\\[0.2cm]
\langle s_\alpha\sigma_\beta\rangle_{\rm Data} &= \langle s_\alpha(s_\beta w_\beta-\theta_\beta)\bigl(1-q_\beta[\bs]\bigr)\rangle_{\rm Data}\label{eq:empterm}
\end{align}
where
\[
q_\alpha[\bs] \egaldef \int dx\frac{e^{-\frac{x^2}{2}}}{\sqrt{2\pi}} d\bv p_\alpha(\bv)\tanh^2
\Bigl(\kappa^{-\frac{1}{4}}\bigl(\sigma x+\sum_\gamma (w_\gamma s_\gamma-\theta_\gamma)v^\gamma\bigr)\Bigr),
\]
Note that the last equation actually depends on the activation function (hyperbolic tangent in this case), and the term $\sigma x$ corresponds to 
$\sum_k r_{kj}s_k$ and is obtained by central limit theorem from the independence of the $r_{kj}$. 
$q_\alpha[\bs]$ is the empirical counterpart of 
the EA parameters $q$ and $q_\alpha$ already encountered in Section~\ref{sec:ferro_phase}, and for simple i.i.d. distributions
like Gaussian or Laplace it can be estimated easily. The main point here is that the
empirical terms~(\ref{eq:empterm0},\ref{eq:empterm})
define operators whose decomposition over the SVD modes of $W$ functionally depends only on $w_\alpha,\theta_\alpha$
and on the projection of the data over the SVD modes of $W$.
These terms are driving the dynamics in a precise way. The adaptation of the RBM to this driving force is given by the ${\langle \dots \rangle_{\rm RBM}}$ terms
in~(\ref{eq:dyn_w},\ref{eq:dyn_eta},\ref{eq:dyn_theta}), which can be
estimated in the thermodynamic limit (see Section~\ref{sec:non-linear})
as a function of $w_\alpha$, $\theta_\alpha$ and $\eta_\alpha$ alone, by means of the order parameters
$(m_\alpha,\bar m_\alpha)$ given in Section~\ref{sec:RS} and once the mean-field equations~(\ref{eq:mf1},\ref{eq:mf2}) have been solved.
Of course, all of this is based on the hypothesis that the RBM stays in the RS domain during learning. Experimental evidence supports this hypothesis (see Section~\ref{sec:validation}).

\subsection{Linear instabilities}\label{sec:linear}
At the beginning of the learning, the elements of the weight matrix $W$ are usually small; therefore, we can analyze the linear behavior of the RBM in order to understand what happens.
In particular, we will see that the dynamics of a non-linear RBM at the beginning of the learning can be understood by looking at the stability analysis of the learning process. The purpose of this analysis is to identify which ``deformation modes'' of the weight matrix are the most unstable, and how they are related to the input 
data. Additionally, a good feature of the linear case 
is that no averaging is needed, the dynamics being actually independent on the particular realization of the components $u_i^\alpha$ and $v_j^\beta$. Also, always relative to the linear case, no
distinction has to be made between dominant modes and other modes to be treated as the noise component of equation~(\ref{eq:wsvd}), we can simply put all of the modes on the same footing.

Let us analyze the linear regime for an RBM with binary units. 
The derivation is done by rescaling all the weights and fields by a common ``inverse temperature'' $\beta$ and letting this go to zero 
in equation~(\ref{eq:cd1}). In principle, the stability analysis would lead to assume both the weights and the magnetizations to be small. 
However, we can assume only the magnetizations to be small and consider a slightly more general case with no approximations.
Such a case is analogous to a linear RBM whose magnetizations undergo Gaussian fluctuations, and it is derived by keeping up to quadratic terms of the magnetizations in the mean field free energy:
\begin{align*}
F_{MF}(\mu,\nu) &\simeq \frac{1}{2}\sum_{i=1}^N (1+\mu_i)\log(1+\mu_i)+(1-\mu_i)\log(1-\mu_i)\\[0.2cm]
&+\frac{1}{2}\sum_{j=1}^M (1+\nu_j)\log(1+\nu_j)+(1-\nu_j)\log(1-\nu_j)\\[0.2cm]
&-\sum_{i,j}\bigl(W_{ij}\mu_i\nu_j-\frac{1}{2}W_{ij}^2(\mu_i^2+\nu_j^2)\bigr)
+\sum_{i=1}^N\eta_i\mu_i+\sum_{j=1}^M\theta_j\nu_j \\[0.2cm]
&= \frac{1}{2\sigma_v^2}\sum_{i=1}^N \mu_i^2+\frac{1}{2\sigma_h^2}\sum_{j=1}^M \nu_i^2-\sum_{ij}W_{ij}\mu_i\nu_j
+\sum_{i=1}^N\eta_i\mu_i+\sum_{j=1}^M\theta_j\nu_j.
\end{align*}
where the variances $(\sigma_v^2,\sigma_h^2)$ of respectively visible and hidden variables read ($N_h<N_v$):
\begin{align}
\sigma_v^{-2} &= 1 + \sum_j W_{ij}^2 \simeq 1+\sum_\alpha w_\alpha^2\label{eq:sigma_v}\\[0.2cm]
\sigma_h^{-2} &= 1 + \sum_i W_{ij}^2 = 1+\sum_\alpha w_\alpha^2.\label{eq:sigma_h}
\end{align}
We omitted the quadratic term in $W_{ij}$ coming from the TAP contribution to the free energy, which is optional for our stability analysis.
In absence of this term the modes evolve strictly independently, while taking it into account leads to 
a correction to individual variances which couples the modes.

Magnetizations $(\mu,\nu)$ of visible and hidden variables have now Gaussian fluctuations with covariance matrix
\begin{equation*}
C(\mu_v,\mu_h) \egaldef
\left[
\begin{matrix}
\sigma_v^{-2} & - W \\[0.2cm]
-W^T & \sigma_h^{-2}
\end{matrix}
\right]^{-1}
\end{equation*}
We can discard the biases of the data and the related fields ($\theta_\alpha,\eta_\alpha)$ with a proper centering of the
variables, and we consider equation~(\ref{eq:dyn_w}) directly involving the
covariance matrix of the data expressed in the frame defined by the SVD modes of $W$
\begin{equation*}
\langle s_\alpha\sigma_\beta\rangle_{\rm Data} = \sigma_h^2 w_\beta \langle s_\alpha s_\beta\rangle_{\rm Data}.
\end{equation*}
From $C(\mu_v,\mu_h)$ we get the other terms yielding the following equations:
\begin{align*}
  \frac{dw_\alpha}{dt} &= w_\alpha\sigma_h^2\Bigl(\langle s_\alpha^2\rangle_{\rm Data} - \frac{\sigma_v^2}{1-\sigma_v^2\sigma_h^2w_\alpha^2}\Bigr) \\
  \Omega_{\alpha\beta}^{v,h} &=
(1-\delta_{\alpha\beta})\sigma_h^2\Bigl(\frac{w_\beta-w_\alpha}{w_\alpha+w_\beta}\mp\frac{w_\beta+w_\alpha}{w_\alpha-w_\beta}\Bigr)\langle s_\alpha s_\beta\rangle_{\rm Data}
\end{align*}
Note that these equations are exact for a linear RBM, since they can be derived without any reference to the coordinates of $u_\alpha$ and $v_\alpha$
over which we average in the non-linear regime. These equations tell us that the learning dynamics drives the rotation of the vectors
$\bm{u}^\alpha$ (and $\bm{v}^\alpha$) until they are aligned to the principal components of the data,
i.e. until $\langle s_\alpha s_\beta\rangle_{\rm Data}$ becomes diagonal.
Calling $\hat w_\alpha^2$ the empirical variance of the data,
the system reaches the following equilibrium values:
\begin{equation*}
w_\alpha^2 =
\begin{cases}
\DD \frac{\hat w_\alpha^2-\sigma_v^2}{\sigma_v^2\sigma_h^2 \hat w_\alpha^2}\qquad\ \ \text{if} \qquad \hat w_\alpha^2 > \sigma_v^2,\\
\DD 0\qquad\qquad\qquad \text{if} \qquad \hat w_\alpha^2 \le \sigma_v^2.
\end{cases}
\end{equation*}
assuming $(\sigma_v,\sigma_h)$ fixed.
From this we see that the RBM selects the strongest SVD modes of the data.
The linear instabilities correspond to directions along which the variance of the data is above the threshold $\sigma_v^2$, and
they determine the development of the unstable deformation modes of the weight matrix; during the learning process, these modes will eventually interact following the usual
mechanism of non-linear pattern formation encountered for instance in reaction-diffusion processes~\cite{Cross-Hohenberg}. Other possible deformations are damped to zero. The linear RBM will therefore learn all the principal
components that passed the threshold (up to $N_h$).
Note that this selection mechanism is already known to occur for linear auto-encoders~\cite{Bou-Kamp}
or other similar linear Boltzmann machines~\cite{TiBi}.
On Fig.~\ref{fig_linRBM_dyn} we can see the eigenvalues being learned one by one in a linear RBM. 
 
If we take into account the expressions~(\ref{eq:sigma_v},\ref{eq:sigma_h}) for $(\sigma_v,\sigma_h)$, we see that the system cannot reach 
a stable solution except for the case in which all the modes are below the threshold at the beginning. Otherwise the modes that are excited first will eventually grow
like $\sqrt{t}$ for a large time, and the excitation threshold will tend to zero for all modes.

In any case, by the definition of a multivariate Gaussian, this simple non-linear analysis 
describes a unimodal distribution. In order to properly understand the dynamics and the steady-state regime
of a non-linear RBM, a well suited mean-field theory is required.
\begin{figure}
	\centering
	\includegraphics[scale=0.7]{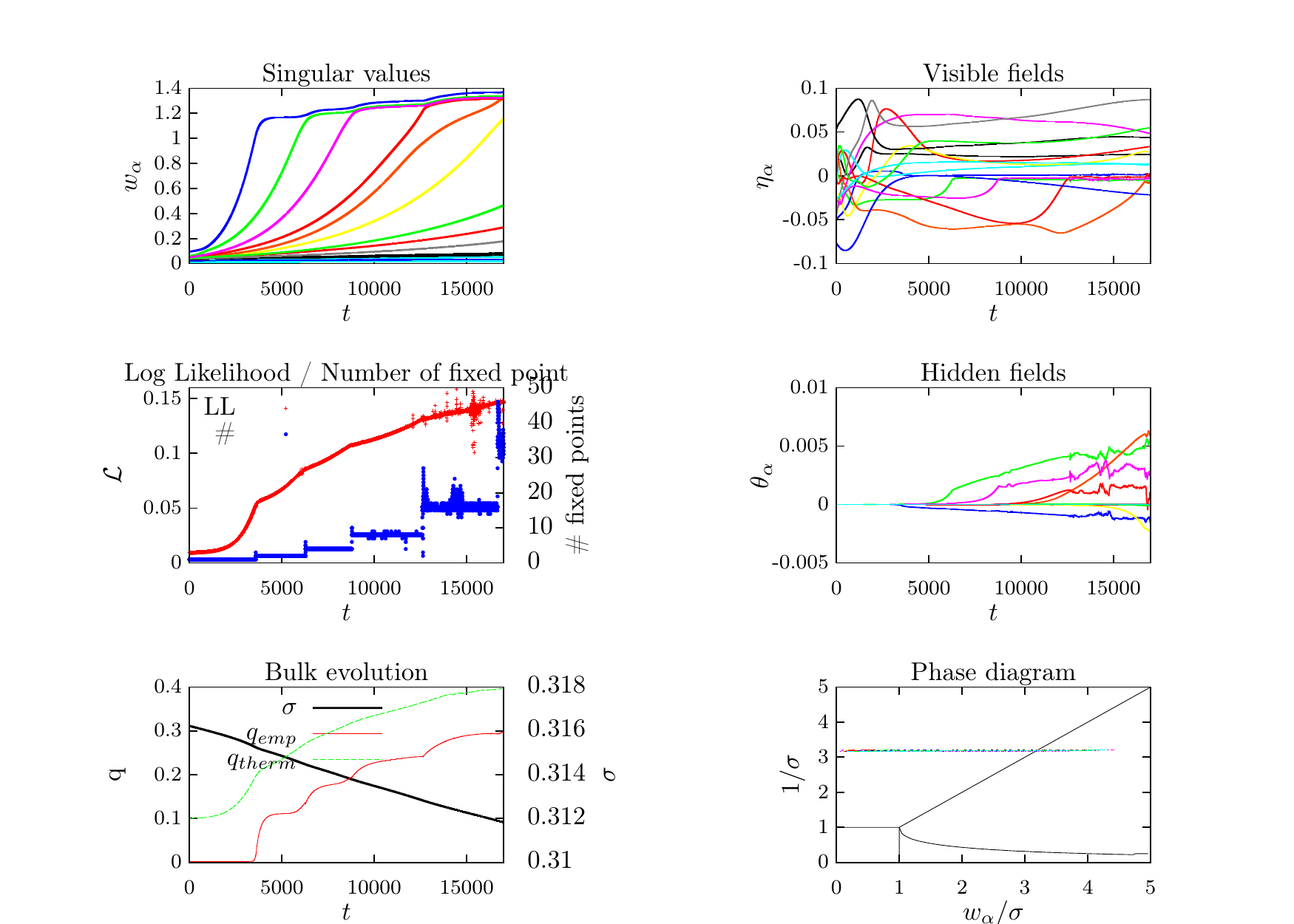}
\caption{Predicted mean evolution of an RBM of size $(N_v,N_h)=(1000,500)$ learned on a synthetic dataset of $10^4$ samples of size $N_v= 1000$
obtained from a multimodal distribution with $20$ clusters randomly defined on a submanifold of dimension $d=15$. The dynamics follows the projected
magnetizations in this reduced space with help of $15$ modes. We observe a kind of pressure on top singular values from lower ones.\label{fig:learning}}
\end{figure}

\subsection{Non-linear regime}\label{sec:non-linear}
In the linear regime, some specific modes are selected and at some point they start to interact in a non-trivial manner. As seen explicitly in~(\ref{eq:empterm}), the empirical
terms in~(\ref{eq:cd1}-\ref{eq:cd_theta}) involve higher order statistics of the data
 and then the Gaussian estimation with $\sigma_v^2=\sigma_h^2=1$
of the RBM response terms $\langle s_\alpha\rangle_{\rm RBM}$ and $\langle s_\alpha\sigma_\beta\rangle_{\rm RBM}$ is no longer valid when the interactions kick in.
Schematically, the linear regime is valid as long as the RBM is found in the paramagnetic phase. But as soon as one mode passes the linear threshold, 
the system enters the ferromagnetic phase. Then the 
proper estimation of the response terms follows from the thermodynamic analysis performed in Section~\ref{sec:thermo}, and depends on the  
assumptions made on the statistical properties of the components of the singular vectors of the weight matrix. 
In the case of Gaussian i.i.d. components, given the analysis proposed in Section~\ref{sec:ferro_phase},
we know that the mode with the highest singular value completely dominates the ferromagnetic phase: we expect one single ferromagnetic state
characterized by magnetizations aligned to this mode only, while magnetizations correlated to other modes vanish. To be precise, this is the correct
picture without fields ($\eta=\theta=0$) but we don't expect this picture to drastically change in the case of non-vanishing fields. In fact,
solving the mean-field equations in presence of the fields show the appearance of meta-stable states correlated with single dominated modes;
however, the free energy difference with respect to the ground state, i.e. the state correlated with the mode with the highest singular value,
is of order $O\bigl(L(w_\alpha-w_{max})\bigr)$, which means that the contribution of those meta-stable states become rapidly negligible with large system size.

To draw a realistic picture of the learning process we now consider Laplace i.i.d. components for the SVD modes that,
as seen in Section~\ref{sec:ferro_phase}, allow the ferromagnetic phase to be of compositional type. The reason for this is that the Laplace 
distribution leads to less interference among modes than the Gaussian distribution, so that the modes will weakly interact in the mean-field equations.
Solving equations (\ref{eq:mf1_alpha},\ref{eq:mf2_alpha},\ref{eq:q_exp},\ref{eq:qalpha_exp}) in absence of fields yields the following 
picture: one fixed point solution will typically have non-vanishing magnetizations $\{m_\alpha,\bar m_\alpha\}$ for all 
$\alpha$ such that $w_\alpha\in[w_{max}-\Delta w, w_{max}]$, where $\Delta w$ is approximately the gap $\Delta w(q,\bar q)$
defined in (\ref{eq:gap}). This solution is a degenerate ground state, all other solutions being obtained by independently reversing  
the signs of the condensed magnetizations $(m_\alpha,\bar m_\alpha)$. Hence for $K$ condensed modes we get a degeneracy
of $2^K$. When the fields are included, all these fixed points are displaced in the direction of the fields, and some of them may disappear. In the end we are left with a potentially large amount of nearly degenerate states able to cover the empirical distribution of the data, at least in some simple cases.
\begin{figure}
	\centering
	\includegraphics[scale=0.6]{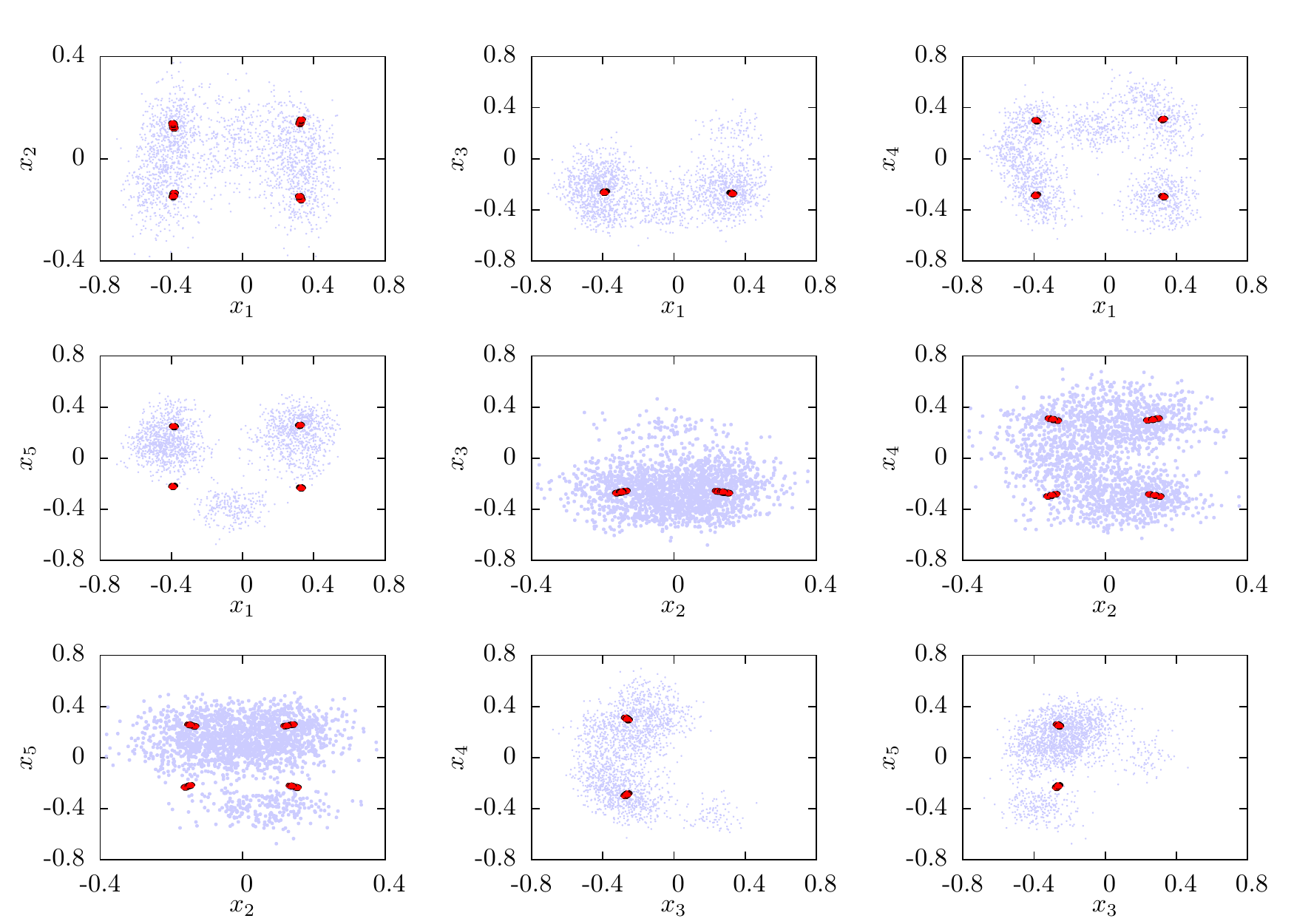}
\caption{Scatter plots of the mean-field magnetizations (in red) and the samples (in blue) in various plan projections defined by pairs of left eigenvectors of
$W$. This case corresponds to an RBM of size $(N_v,N_h)=(100,50)$ learned on a synthetic dataset of $10^4$ samples of size $N_v= 100$
obtained from a multimodal distribution with $11$ clusters randomly defined on a submanifold of dimension $d=5$. The scatter plot is obtained at a point where
$5$ modes have already condensed and $16$ saddle point solutions have been found.\label{fig:scatter}}
\end{figure}

Coming back to the learning dynamics the terms corresponding to the response of the RBM in~(\ref{eq:cd1},\ref{eq:cd_theta}) are estimated in the thermodynamic limit by means of
the previously defined order parameters:
\begin{align*}
\langle s_\alpha \rangle_{\rm RBM} &= \frac{1}{Z_{\rm Therm}}\sum_\omega e^{-Lf(m^\omega,\bar m^\omega,q^\omega,\bar q^{\omega})}\bar m_\alpha^\omega  
\egaldef \langle \bar m_\alpha \rangle_{\rm Therm},\\[0.2cm]
\langle s_\alpha\sigma_\beta\rangle_{\rm RBM} &= 
\frac{1}{Z_{\rm Therm}}\sum_\omega e^{-Lf(m^\omega,\bar m^\omega,q^\omega,\bar q^{\omega})}\bar m_\alpha^\omega m_\beta^\omega \egaldef \langle \bar m_\alpha m_\beta\rangle_{\rm Therm}.  
\end{align*}
Here $\langle \dots \rangle_{\rm Therm}$ denotes the thermodynamical average
and the partition function is expressed, in the thermodynamic limit, as   
\[
Z_{\rm Therm} \egaldef \sum_\omega e^{-Lf(m^\omega,\bar m^\omega,q^\omega,\bar q^{\omega})}
\]
The index $\omega$ runs over all the stable fixed point solutions of~(\ref{eq:mf1},\ref{eq:mf2}) weighted accordingly to the free energy given by~(\ref{eq:freen}).
These are the dominant contributions as long as free energy differences are $O(1)$, 
and the internal fluctuations given by each fixed point are comparatively of order $O(1/L)$. 
In addition, the dynamics of the bulk can be characterized by empirically defining $\sigma^2$:

\[
\sigma^2 = \frac{1}{L}\sum_{ij}r_{ij}^2, 
\]

whose evolution is:

\begin{align*}
\frac{d\sigma^2}{dt} &= \frac{1}{L}\sum_{ij}r_{ij}\frac{dW_{ij}}{dt},\\[0.2cm]
&=\frac{1}{L}\sum_{ij}r_{ij}\left[\langle s_i\tanh\Bigl(\sum_k r_{kj}s_k +\kappa^{-\frac{1}{4}}\sum_\alpha (w_\alpha s_\alpha-\theta_\alpha)v_j^\alpha\sqrt{L}\Bigr)\rangle_{\rm Data}
-\langle s_i\sigma_j\rangle_{\rm RBM}\right]
\end{align*}
given the independence of $r_{i*}$ (resp. $r_{*j}$) and $u_i^\alpha$ (resp. $v_i^\alpha$).

Exploiting the self-averaging properties of both the empirical and the response terms 
with respect to $r_{ij}$, $u_i^\alpha$ and $v_j^\alpha$ yields
\begin{align*}
\frac{1}{L^2}\sum_{ij} r_{ij}\langle s_i\sigma_j\rangle_{\rm Data} &= \frac{\sigma^2}{L}\bigl(1-\langle q[\bs]\rangle_{\rm Data}\bigr)\\[0.2cm] 
\frac{1}{L^2}\sum_{ij} r_{ij}\langle s_i\sigma_j\rangle_{\rm RBM}  &= \frac{\sigma^2}{L}\bigl(1- \langle q\rangle_{\rm Therm}\bigr),
\end{align*}
with 
\[
q[\bs] \egaldef \int dx\frac{e^{-\frac{x^2}{2}}}{\sqrt{2\pi}} d\bv p(\bv)\tanh^2
\Bigl(\kappa^{-\frac{1}{4}}\bigl(\sigma x+\sum_\gamma (w_\gamma s_\gamma-\theta_\gamma)v^\gamma\bigr)\Bigr).
\]
Summarizing, our equations take the suggestive form
\begin{align}
\frac{1}{L}\frac{dw_\alpha}{dt} &= \langle s_\alpha(w_\alpha s_\alpha-\theta_\alpha)(1-q_\alpha[\bs])\rangle_{\rm Data} 
-\langle \bar m_\alpha(w_\alpha\bar m_\alpha-\theta_\alpha)(1-q_\alpha)\rangle_{\rm Therm},\label{eq:dyn_therm_w}\\[0.2cm]
\frac{d\eta_\alpha}{dt} &= \langle\bar m_\alpha\rangle_{\rm Therm}-\langle s_\alpha\rangle_{\rm Data}+\sum_\beta 
\Omega_{\alpha\beta}^v\eta_\beta,\label{eq:dyn_therm_eta}\\[0.2cm]
\frac{d\theta_\alpha}{dt} &= \langle(w_\alpha\bar m_\alpha-\theta_\alpha)(1-q_\alpha)\rangle_{\rm Therm}
-\langle (w_\alpha s_\alpha-\theta_\alpha)(1-q_\alpha[\bs])\rangle_{\rm Data}+\sum_\beta \Omega_{\alpha\beta}^h\theta_\beta,\label{eq:dyn_therm_theta}\\[0.2cm]
\frac{d\sigma^2}{dt} &= \sigma^2\Bigl(\langle q\rangle_{\rm Therm} -\langle q[\bs]\rangle_{\rm Data}\Bigr),\label{eq:dyn_therm_sigma}
\end{align}
with $\Omega^{v,h}$ taking the form of a difference between a data averaging $\langle \dots \rangle_{\rm Data}$ and 
a thermodynamical averaging $\langle \dots \rangle_{\rm Therm}$ involving only order parameters. Note here that the $w_\alpha$ variables, with respect to the other variables, evolve on a faster time scale.
This is our final and main result, which might possibly help improving current learning algorithms of RBMs.
From this, it is clear what the learning of an RBM is aimed at: the equations will converge once the dataset is clustered in such a way
that each cluster is represented by a solution of the mean-field equations with magnetizations  $\bar m_\alpha$ and EA parameters $q_\alpha$ 
corresponding respectively to their empirical counterparts $\langle s_\alpha \rangle$ and $\langle q_\alpha[\bs]\rangle$ representing cluster magnetization and variance. 
In particular, these clusters can somehow be regarded as the attractors in the context of feed-forward networks, defining a partition of the data. This can be seen by starting from random configurations and letting the system evolve using the TAP equations or a MCMC method. At the end the system will end up in one of those clusters (characterized by a fixed point of the mean-field equations). Note that this is the reason why the RBM needs to reach a ferromagnetic phase with many states to be able to match the empirical term in~(\ref{eq:cd1}) and reach convergence.

Additionally, the log likelihood~(\ref{eq:LL}) can be estimated in the thermodynamic limit
(after normalization by $L$).
\begin{align*}
{\mathcal L} &= \Big\langle\sqrt{\kappa}\EE_{x,v}
\Bigl[\log\cosh\Bigl(\kappa^{-\frac{1}{4}}\bigl(\sigma x+\sum_\alpha(w_\alpha s_\alpha-\theta_\alpha)v^\alpha\bigr)\Bigr)\Bigr]\Big\rangle_{\rm Data} \\[0.2cm]
&-\big\langle\sum_\alpha \eta_\alpha s_\alpha\big\rangle_{\rm Data}-\frac{1}{L}\log\big(Z_{\rm Therm}\bigr),
\end{align*}

As an example, for a multimodal data distribution with a finite number of clusters embedded in a high dimensional configuration space, 
the SVD modes of $W$ that will develop are the one pointing to the directions of the magnetizations 
defined by these clusters (which will be almost surely orthogonal, given the high dimensionality of the embedding space).
In this simple case the RBM will evolve, as in the linear case, to a state in which the empirical term  becomes diagonal, while the singular values 
will adjust to match the proper magnetization in each fixed point.

We have integrated equations~(\ref{eq:dyn_therm_w},\ref{eq:dyn_therm_eta},\ref{eq:dyn_therm_theta},\ref{eq:dyn_therm_sigma},\ref{eq:dyn_rotv},\ref{eq:dyn_roth}) 
in simple cases by using the Laplace averaging of the components of the SVD modes and using for the EA parameters the expressions given in (\ref{eq:q_exp},\ref{eq:qalpha_exp}).
Basically, the hidden distribution to be modeled is 
defined by
\begin{equation}
P(s) = \sum_{c=1}^C p_c\prod_{i=1}^N \frac{e^{h_i^c s_i}}{2\cosh(h_i^c)},
\label{eq:clusters}
\end{equation}
i.e. a multimodal distribution composed of $C$ clusters of independent variables, where the magnetization of each variable $i$
in cluster $c$ is given by $m_i^c = \tanh(h_i^c)$. Each cluster is weighted by some probability $p_c$.
In addition we assume these magnetization vectors $m^c$ to be embedded in a low dimensional space
of dimension $d<<N$. $d$ defines the rank of $W$.
The initial conditions for $W$ are such that the left singular vectors $\{u_\alpha,\alpha=1,\ldots d\}$ span this low
dimensional space. An example of the typical dynamics obtained in the case at hand is shown in Figure~\ref{fig:learning}. In contrast to the
linear problem where singular values evolve independently, here we distinctively witness the interaction between singular values: 
a kind of pressure is exerted by lower modes on higher ones resulting in successive
bumps in the dynamics of the top modes. The number of states is roughly multiplied by two each time a mode condenses and get close enough
to the top modes. Concerning the dynamics of the fields, we don't really observe convergence towards stable directions. Some (possibly numerical)
instability is observed when many modes condense, with both the fields and the number of fixed point solutions becoming very noisy.
It is also interesting to see how the magnetizations related to the states are distributed with respect to the dataset.
On Figure~\ref{fig:scatter} we see that the fixed points tend (as expected) to settle within dense regions of sample points.
However, our coarse description shows some limitations for more complex situations, the number of adjustable parameters being too limited to be able 
to match arbitrary distributions of clusters. 
It is then appropriate to think about this behaviour in a mean sense; at least,
it is able to reproduce a realistic 
learning dynamics of the singular values of the weight matrix.

\begin{figure}
  \includegraphics[width=\linewidth]{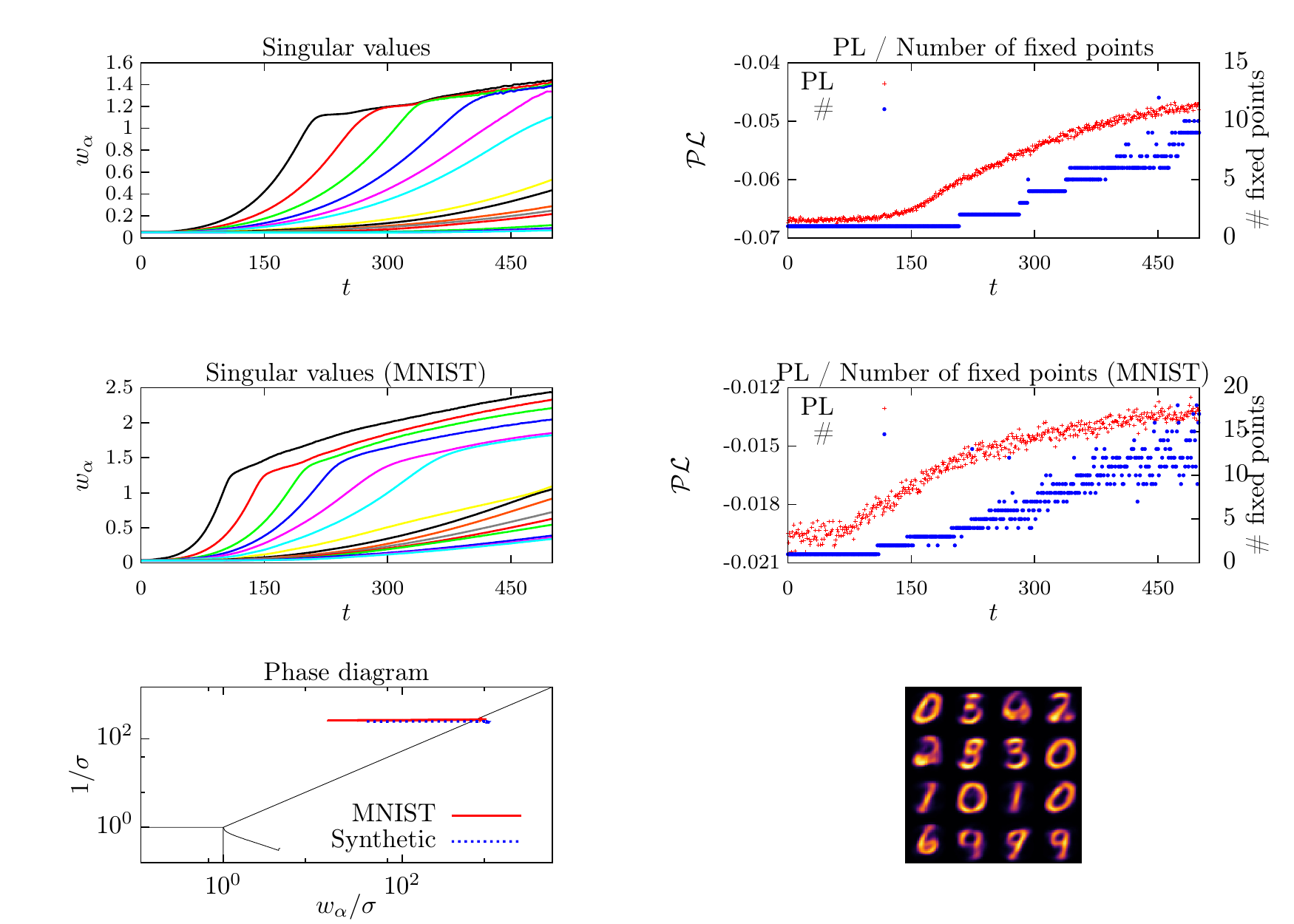}
	\caption{Experimental evolution of an RBM during training for a synthetic dataset (top plots, to compare to Fig.~\ref{fig:learning}) and for MNIST (central plots). The bottom left plot shows the learning trajectories in the phase diagram, while the bottom right image shows some examples of fixed point solutions for MNIST (we note the presence of some spurious fixed points).}
  \label{fig:plot_expt}
\end{figure}

\begin{figure}
  \centering
  \includegraphics[width=.5\linewidth]{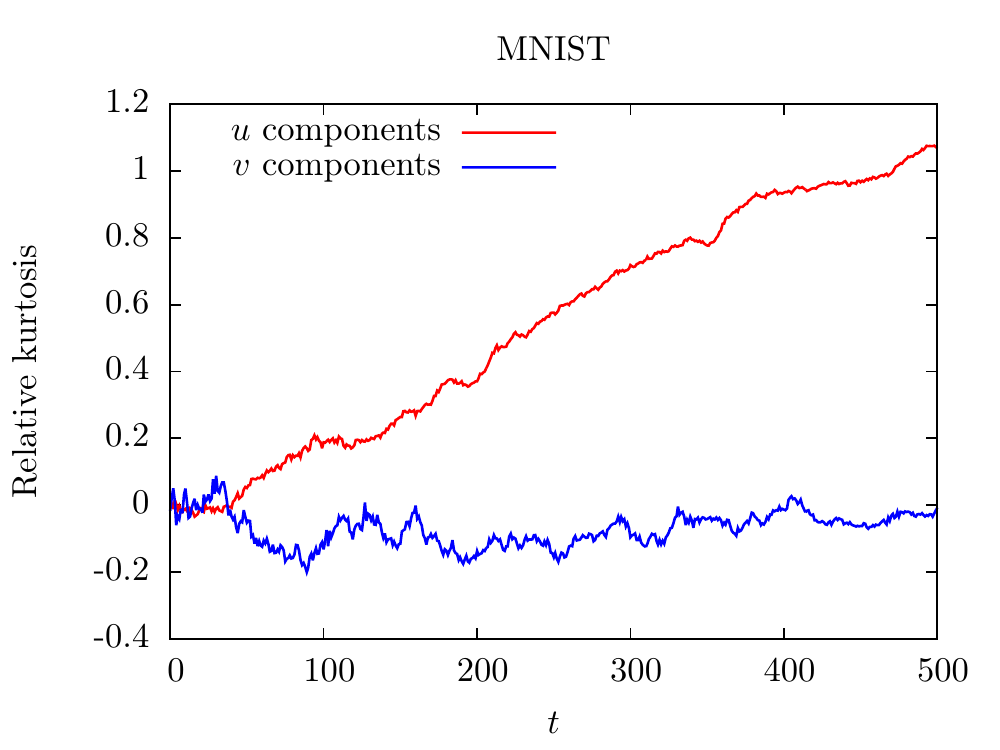}
	\caption{Relative kurtosis of the components of the modes after training on MNIST.}
  \label{fig:kurtosis}
\end{figure}

\section{Numerical Experiments} \label{sec:validation}
Given the comprehensive theoretical analysis of the RBM model given in the previous sections, we are now able to provide a meaningful description of the learning dynamics for a RBM trained with k-steps contrastive divergence (CDk) ~\cite{Hinton_CD}. The observations presented in this section will serve as a validation for the theoretical analysis. First, to provide a more direct comparison to section \ref{sec:non-linear}, we will look at the learning dynamics of an RBM trained on a set of simple synthetic data. Subsequently, we will test the model against real world data by training on the MNIST dataset.


\subsection{Synthetic dataset}
As a simple case, we trained the RBM over the same dataset defined in fig.~\ref{fig:learning}, derived from the simple multimodal distribution in eq.~\ref{eq:clusters} (see Appendix ~\ref{app:data} for details). Thus we set \( N_v = 1000 \), \(N_h = 500 \) and we trained using \(10^4\) samples with an effective dimension \(d = 15\) organized in \(20\) separate clusters. The weights are initialized from a Gaussian distribution with standard deviation \( \sigma = 10^{-3} \), while the hidden bias is initialized to \(0\) and the visible bias is initialized with the empirical mean of the data

\[
\eta_i = \frac{1}{2} \log \left( \frac{p_i}{1 - p_i} \right)
\]
where \(p_i\) is the empirical probability of activation for the \(i_{th}\) hidden node.

Finally, the training set is divided into batches of size \(20\), 5 Gibbs sampling steps are used (CD5) and the 
learning rate \(\gamma\) is kept low in order to reduce noise, \(\gamma = 5 \times 10^{-8}\).
The results of the analysis are shown in fig.~\ref{fig:plot_expt}. We see that the dynamics of the singular values obtained by direct integration of the mean-field equations (Fig.~\ref{fig:learning}) are very well reproduced, the only difference being a slightly higher pressure on the strongest modes. The number of fixed point solutions also seems to follow the same trend but more noise is present, an indication of the fact that the RBM has a tendency to learn spurious fixed points during the training. The learning trajectory on the phase diagram is also of interest; we see that the RBM is initialized in the paramagnetic state as expected and the effect of the learning is to drive the model to the ferromagnetic phase. Once in the ferromagnetic phase, the trajectory slows down and the model is assessed near the critical line between paramagnetic and ferromagnetic states, where the estimate of the weights is most stable (according to~\cite{MM_criticality}). Finally, in Fig.~\ref{fig:scatter_expt} we see how the RBM is able to generate a proper clustering of the data over the spectral modes. In particular, the TAP fixed points of the trained model are well distributed and able to cover the full data distribution, improving over the typical behaviour for Laplace distributed weights that emerged with our theoretical analysis (Fig.~\ref{fig:scatter}).

\begin{figure}
  \includegraphics[width=\linewidth]{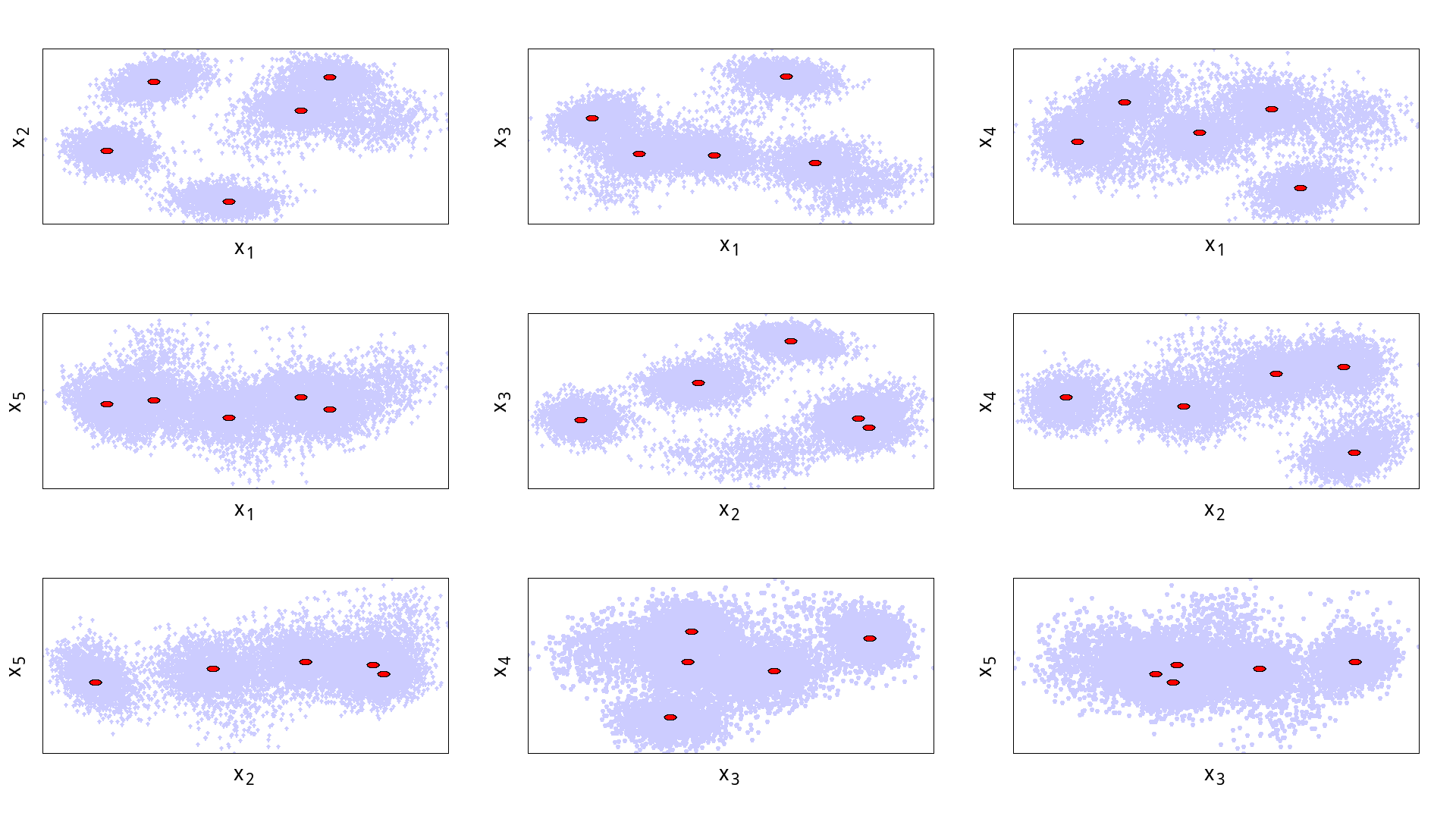}
	\caption{Scatter plots of samples (blue) and fixed points (red) in various plan projections defined by pairs of left eigenvectors of W. The dataset is the same as in Fig.~\ref{fig:scatter} and in this case 5 modes have condensed and 7 fixed point solutions have been found.}
  \label{fig:scatter_expt}
\end{figure}

\subsection{MNIST dataset}
The MNIST dataset is composed by 70000 handwritten digits (60000 for training, 10000 for testing) of size \(28\times28\) pixels. Being highly multimodal, we expect this dataset  to push the limits of our spectral analysis. For the training, the initialization of the model is the same one used for the synthetic data, 10000 training samples are used (taken at random from the dataset) and the values of the other hyperparameters are as follows: \( N_v = 784 \), \(N_h = 100\), batch size \( = 20 \), \(\gamma = 5 \times 10^{-7}\). With respect to the linear regime (described in section \ref{sec:linear}) we see in Fig. \ref{fig_plot_svdmode} how the RBM is able to learn the SVD of the dataset quite precisely at the beginning of the training, then the learning dynamics quickly enter the non-linear regime. Even in this highly multimodal scenario, our findings over simple synthetic data seem to be confirmed, as seen in Fig.~\ref{fig:plot_expt}. The high number of modes, however, determines an increase in the magnitude of the singular values of condensed modes and seems to destabilize a bit the learning, making the computation of fixed points less reliable. In fact, as a high number of modes are condensing, the model is not able to get rid of all the spurious fixed points. This problem can be mitigated by using an even smaller learning rate, at the cost of slowing down the training. Probably, using a variable learning rate could be a more practical solution (descreasing the learning rate from time to time to let the model eliminate unneeded fixed points).
Concerning the (relative) kurtosis of the mode components distributions, we did not observe a very stable and systematic behavior. Either we see small fluctuations around zero, either some excursions occur and a finite value in the range $[0,3]$ is building up either for the $u$ or the $v$ components, coherently to the compositional phase interpretation given previously. The latter is the case for MNIST, as shown in Fig.~\ref{fig:kurtosis}. Additionally the transverse part of the fields, meaning orthogonal to the condensed modes, is usually not completely negligible, in contrary to what we assume in~(\ref{def:eta},\ref{def:theta}). This clearly constitutes a limitation of our analysis. These transverse components offer more flexibility for generating and selecting fixed points and interfere in some non-trivial way with the kurtosis property, which possibly explains why we don't get a systematic behavior.

\begin{figure}
  \begin{subfigure}{\linewidth}
    \centering
    \includegraphics[width=.08\linewidth]{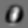}
    \includegraphics[width=.08\linewidth]{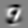}
    \includegraphics[width=.08\linewidth]{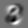}
    \includegraphics[width=.08\linewidth]{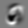}
    \includegraphics[width=.08\linewidth]{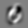}
    \includegraphics[width=.08\linewidth]{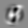}
    \includegraphics[width=.08\linewidth]{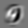}
    \includegraphics[width=.08\linewidth]{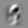}
    \includegraphics[width=.08\linewidth]{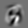}
    \includegraphics[width=.08\linewidth]{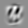}
    \caption{}
    \label{fig:modes_data}
  \end{subfigure}
  \begin{subfigure}{\linewidth}
  	\centering
    \includegraphics[width=.08\linewidth]{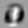}
    \includegraphics[width=.08\linewidth]{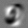}
    \includegraphics[width=.08\linewidth]{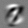}
    \includegraphics[width=.08\linewidth]{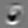}
    \includegraphics[width=.08\linewidth]{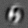}
    \includegraphics[width=.08\linewidth]{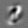}
    \includegraphics[width=.08\linewidth]{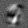}
    \includegraphics[width=.08\linewidth]{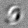}
    \includegraphics[width=.08\linewidth]{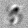}
    \includegraphics[width=.08\linewidth]{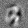}
    \caption{}
    \label{fig:modes_tr}
  \end{subfigure}\par
  \begin{subfigure}{\linewidth}
    \centering
    \includegraphics[width=.08\linewidth]{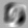}
    \includegraphics[width=.08\linewidth]{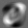}
    \includegraphics[width=.08\linewidth]{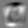}
    \includegraphics[width=.08\linewidth]{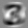}
    \includegraphics[width=.08\linewidth]{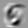}
    \includegraphics[width=.08\linewidth]{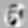}
    \includegraphics[width=.08\linewidth]{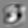}
    \includegraphics[width=.08\linewidth]{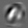}
    \includegraphics[width=.08\linewidth]{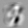}
    \includegraphics[width=.08\linewidth]{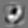}
    \caption{}
    \label{fig:modes_tr_10}
  \end{subfigure}
   \caption{\textbf{(a)} Principal components extracted from the training set (starting from the second, as the first one is encoded into the visible bias). \textbf{(b)} The first 10 modes of a RBM trained for 1 epoch (with \(\gamma \simeq 0.1\)). \textbf{(c)} Same as (b) but after a 10 epochs training.}
  \label{fig_plot_svdmode}
\end{figure}

\section{Discussion}
Before drawing some perspectives, let us summarize the main outcomes of the present work:
\begin{itemize}
\item {\bf(i) thermodynamic properties of realistic RBMs:} 
our analysis focused on a non-i.i.d. ensemble of weight matrices, whose derivation has been inspired by empirical observations obtained by training RBMs on real data.
\item {\bf(ii) RS equations and compositional phase}: we found a way of writing the RS equations for the RBM (in particular with equations~(\ref{eq:mf1_alpha},\ref{eq:mf2_alpha},\ref{eq:mf3_alpha},\ref{eq:mf4_alpha})) which leads to a simple characterization of the ferromagnetic phase where the RBM is assumed to operate.
Schematically, a negative relative kurtosis for the distribution of the singular vectors' components favors the proliferation of metastable states, while a positive one tends to 
favor a compositional phase. In particular, we were able to precisely address a concrete case presenting the compositional phase by considering a Laplace distribution
for the singular vectors' components.
\item {\bf(iii) a set of equations representing a typical learning dynamics} 
that defines a trajectory in $\{w_\alpha(t),\eta_\alpha(t),\theta_\alpha(t),\Omega_{\alpha\beta}^{v,h}(t),\sigma^2(t)\}$.
The spectrum of the dominant singular values, represented by $\{w_\alpha(t)\}$ and expressing the information content of the RBM, is playing the main role.
The bulk of dominated modes corresponding to noise sees its dynamics summarized by the evolution of $\sigma^2(t)$. Rotations of dominant singular
vectors during the learning process are given by $\Omega^{v,h}$ while the projections of the biases along the main modes are given by $\eta$ and $\theta$. 
These equations have been obtained by averaging over the components of left and right SVD vectors of the weight matrix, while keeping fixed the quantities considered to be relevant. This averaging actually corresponds to a standard self-averaging assumption in a RS phase.
\item {\bf(iv) a clustering interpretation of the training process} is obtained through 
equations~(\ref{eq:dyn_therm_w},\ref{eq:dyn_therm_eta},\ref{eq:dyn_therm_theta},\ref{eq:dyn_therm_sigma}) where it is explicitly shown the kind of matching that
the RBM is trying to perform between the order parameters obtained from the fixed point solutions and their empirical counterparts in the non-linear regime. 
A natural clustering of the data can actually be defined by assigning to each sample the fixed point obtained after initializing the fixed point equations with a
visible configuration corresponding to that same sample.
\end{itemize}
The main picture emerging from the present analysis is that of a set of clusters corresponding to the fixed points of the RBM, which try to uniformly cover the support of the dataset. A full understanding of the mechanism by which the RBM manages 
to properly cover the dataset is still lacking, even though the case of Laplace distributed singular vectors' components gives
some insights. By comparison, real RBMs have more flexibility than the simple ``mean Laplace RBM'' considered in Section~\ref{sec:ferro_phase}
and they can produce a good covering of the data manifold. We were not yet able to precisely pinpoint the main ingredients for that mechanism, 
even though we suspect the transverse biases (orthogonal to the modes) of the hidden units 
to be the missing ingredient in our analysis. 

From the theoretical point of view we would like to see how these results can be adapted to more complex models like DBM or generative models based on 
convolutional networks. In particular we would like to understand whether adding more layers can facilitate the covering of the dataset by fixed points.
From the practical point of view these results might help to orientate the choice of the hyper-parameters used for training an RBM
and to refine the criteria for assessing the quality of a learned RBM. For instance, the choice of the number of hidden variables is dictated 
by two considerations: the effective rank of $W$, i.e. the number of relevant modes to be considered, and the level of interaction between these modes. 
Using less hidden variables gives more compact RBMs and reduces the rank of $W$ to its needed value,  
but it also leads to modes with stronger interactions, which means less flexibility for generating a good covering of fixed points.

\appendix

\section{AT line}\label{app:AT}

The stability of the RS solution to the mean-field equations is studied along the lines of~\cite{AlTh} by looking at the Hessian
of the replicated version of the free energy and identifying eigenmodes from symmetry arguments.
Before taking the limit $p\to 0$ the free energy reads
\[
f[m,\bar m,Q,\bar Q] = \sum_{a,\alpha}w_\alpha m_\alpha^a\bar m_\alpha^a +\frac{\sigma^2}{2}\sum_{a\ne b} Q_{ab}\bar Q_{ab}
-\frac{1}{\sqrt{\kappa}} A_p[m,Q]-\sqrt{\kappa}B_p[\bar m,\bar Q],
\]
with $A_p$ and $B_p$ given in~(\ref{eq:Ap},\ref{eq:Bp}).
Assuming the small perturbations
\begin{align*}
m_\alpha^a = m_\alpha +\epsilon_\alpha^a\qquad\qquad
\bar m_\alpha^a = \bar m_\alpha +\bar \epsilon_\alpha^a\\[0.2cm]
Q_{ab} = q +\eta_{ab}\qquad\qquad\bar Q_{ab} = \bar q + \bar\eta_{ab},
\end{align*}
around the saddle point $(m_\alpha,\bar m_\alpha,q,\bar q)$, the perturbed free energy reads
\begin{align*}
\Delta f &= \sum_{a,\alpha}w_\alpha\bar\epsilon_\alpha^a\epsilon_\alpha^a+\frac{\sigma^2}{2}\sum_{a\ne b}\bar\eta_{ab}\eta_{ab}
+\sum_{a,b,\alpha,\beta}\bigl[\bigl(\delta_{ab}\bar A_{\alpha\beta}+\bar\delta_{ab}\bar B_{\alpha\beta}\bigr)\epsilon_\alpha^a\epsilon_\beta^b
+CT\bigr]\\[0.2cm]
&+\sum_{a\ne b,c,\alpha}\bigl[\bigl((\delta_{ab}+\delta_{ac})\bar C_{\alpha}+(1-\delta_{ac}-\delta_{bc})\bar D_{\alpha}\bigr)\epsilon_\alpha^c\eta_{ab}
+CT\bigr]\\[0.2cm]
&+\sum_{a\ne b,c\ne d}\bigl[\bigl(\delta_{(ab)(cd)}\bar E_0+\ind{a\in(cd)\oplus b\in(cd)}\bar E_1+\ind{(ab)\cap(cd)=\emptyset}\bar E_2\bigr)\eta_{ab}\eta_{cd}
+CT\bigr],
\end{align*}
where $CT$ means ``conjugate term'' in the sense $\epsilon \leftrightarrow \bar\epsilon$, $A_{\alpha\beta} \leftrightarrow \bar A_{\alpha\beta}$\ldots,
where $\bar\delta_{ab} \egaldef 1-\delta_{ab}$ and the operators are given by
\begin{align*}
A_{\alpha\beta} &\egaldef (\delta_{\alpha\beta}-m_\alpha m_\beta)w_\alpha w_\beta\qquad\qquad
B_{\alpha\beta} \egaldef \Bigl(\EE_{x,v}\bigl(v^\alpha v^\beta\tanh^2(\bar h(x,v))\bigr)-m_\alpha m_\beta\Bigr)w_\alpha w_\beta\\[0.2cm]
C_\alpha &\egaldef \frac{\kappa^{1/4}\sigma^2}{2}m_\alpha(1-q)w_\alpha\qquad\qquad
D_\alpha \egaldef \frac{\kappa^{1/4}\sigma^2}{2} \Bigl(\EE_{x,v}\bigl(v^\alpha \tanh^3(\bar h(x,v))\bigr)-m_\alpha q\Bigr)w_\alpha\\[0.2cm]
E_0 &\egaldef \frac{\sqrt{\kappa}\sigma^4}{4}(1-q^2)\qquad
E_1 \egaldef \frac{\sqrt{\kappa}\sigma^4}{4}q(1-q)\qquad
E_2 \egaldef \frac{\sqrt{\kappa}\sigma^4}{4}\Bigl(\EE_{x,v}\bigl(\tanh^4(\bar h(x,v))\bigr)-q^2\Bigr)
\end{align*}
with
\[
h(x,u) \egaldef \kappa^{1/4}\bigl(\sqrt{q}\sigma x + \sum_\alpha(m_\alpha w_\alpha - \eta_\alpha)u^\alpha\bigr),
\]
Conjugate quantities are obtained by replacing $m_\alpha$ by $\bar m_\alpha$, $q$ by $\bar q$, $u^\alpha$ by $v^\alpha$, $\eta_\alpha$ by $\theta_\alpha$ and $\kappa$ by $1/\kappa$.
As for the SK model, the $2Kp\times 2Kp$ Hessian thereby defined can be diagonalized with the help of three similar
sets of eigenmodes corresponding to different permutation symmetries in replica space.

The first set corresponds to $2K+2$ replica symmetric modes defined by $\eta_\alpha^a = \eta_\alpha$ and $\eta_{ab} = \eta$
solving the linear system
\begin{align*}
&\bigl(\frac{w_\alpha}{2}-\lambda\bigr)\bar\epsilon_\alpha-\frac{1}{2}\bar A_{\alpha\alpha}\epsilon_\alpha+\sum_\beta\bigl(\bar A_{\alpha\beta}
+(p-1)\bar B_{\alpha\beta}\bigr)\epsilon_\beta+\bigl((p-1)\bar C_\alpha+\frac{(p-1)(p-2)}{2}\bar D_\alpha\bigr)\eta = 0\\[0.2cm]
&\bigl(\frac{w_\alpha}{2}-\lambda\bigr)\epsilon_\alpha-\frac{1}{2}A_{\alpha\alpha}\bar\epsilon_\alpha+\sum_\beta\bigl(A_{\alpha\beta}
+(p-1)B_{\alpha\beta}\bigr)\bar\epsilon_\beta+\bigl((p-1)C_\alpha+\frac{(p-1)(p-2)}{2}D_\alpha\bigr)\bar\eta = 0\\[0.2cm]
&\bigl(\frac{\sigma^2}{2}-\lambda\bigr)\bar\eta+\sum_\alpha\bigl(\bar C_\alpha+\frac{p-2}{2}\bar D_\alpha\bigr)\epsilon_\alpha
+2\bigl(\bar E_0+2(p-2)\bar E_1+\frac{(p-2)(p-3)}{2}\bar E_2\bigr)\eta = 0\\[0.2cm]
&\bigl(\frac{\sigma^2}{2}-\lambda\bigr)\eta+\sum_\alpha\bigl(C_\alpha+\frac{p-2}{2}D_\alpha\bigr)\bar \epsilon_\alpha
+2\bigl(E_0+2(p-2)E_1+\frac{(p-2)(p-3)}{2}E_2\bigr)\bar \eta = 0
\end{align*}
with eigenvalue $\lambda$ solving a polynomial equation of degree $2K+2$ corresponding to a vanishing determinant in the above system.

The second set corresponds to a broken replica symmetry where one replica $a_0$ is different from the others
\[
(\epsilon_\alpha^a,\bar\epsilon_\alpha^a) =
\begin{cases}
(\epsilon_\alpha,\bar\epsilon_\alpha)\qquad\text{for}\ a\ne a_0\\[0.2cm]
(1-p)(\epsilon_\alpha,\bar\epsilon_\alpha)\qquad\text{for}\  a=a_0
\end{cases}
\qquad
(\eta_{ab},\bar\eta_{ab}) =
\begin{cases}
(\eta,\bar \eta)\qquad\text{for}\  a,b\ne a_0\\[0.2cm]
(1-\frac{p}{2})(\eta,\bar\eta)\qquad\text{for}\  a=a_0\ or\ b=a_0
\end{cases}
\]
This set has dimension $(2K+2)(p-1)$. Its parameterization is obtained by imposing orthogonality with the previous one.
The corresponding system reads
\begin{align*}
&\bigl(\frac{w_\alpha}{2}-\lambda\bigr)\bar\epsilon_\alpha-\frac{1}{2}\bar A_{\alpha\alpha}\epsilon_\alpha
+\sum_\beta(\bar A_{\alpha\beta}-\bar B_{\alpha\beta})\epsilon_\beta
+\frac{p-2}{2}\bigl(\bar C_\alpha-\bar D_\alpha\bigr)\eta = 0\\[0.2cm]
&\bigl(\frac{w_\alpha}{2}-\lambda\bigr)\epsilon_\alpha-\frac{1}{2}A_{\alpha\alpha}\bar\epsilon_\alpha+\sum_\beta(A_{\alpha\beta}-B_{\alpha\beta})\bar\epsilon_\beta
+\frac{p-2}{2}\bigl(C_\alpha-D_\alpha\bigr)\bar\eta = 0\\[0.2cm]
&\bigl(\frac{\sigma^2}{2}-\lambda\bigr)\bar\eta+\sum_\alpha(\bar C_\alpha-\bar D_\alpha)\epsilon_\alpha
+2\bigl(\bar E_0+(p-4)\bar E_1-(p-3)\bar E_2\bigr)\eta = 0\\[0.2cm]
&\bigl(\frac{\sigma^2}{2}-\lambda\bigr)\eta+\sum_\alpha(C_\alpha-D_\alpha)\bar\epsilon_\alpha
+2\bigl(E_0+(p-4)E_1-(p-3)E_2\bigr)\bar\eta = 0
\end{align*}

Finally the eigenmodes of the Hessian are made complete by considering a broken symmetry where two replicas $a_0$ and $a_1$ are different
from the others, with the following parameterization dictated again by orthogonality constraints with the previous sets:
\[
(\epsilon_\alpha^a,\bar\epsilon_\alpha^a) = 0,
\qquad
(\eta_{ab},\bar\eta_{ab}) =
\begin{cases}
(\eta,\bar \eta)\qquad\text{for}\ a,b\ne a_0\\[0.2cm]
\frac{3-p}{2}(\eta,\bar\eta)\qquad\text{for}\  a\in{a_0,a_1}\ or\ b\in{a_0,a_1}\\[0.2cm]
\frac{(p-2)(p-3)}{2}(\eta,\bar\eta)\qquad\text{for}\  (a,b)=(a_0,a_1).
\end{cases}
\]
The dimension of this set is now $p(p-3)$, and it represents eigenvectors iff the following system of equations is satisfied
\begin{align*}
&\bigl(\frac{\sigma^2}{2}-\lambda\bigr)\bar\eta +2(\bar E_0-2\bar E_1+\bar E_2)\eta = 0\\[0.2cm]
&\bigl(\frac{\sigma^2}{2}-\lambda\bigr)\eta +2(E_0-2E_1+E_2)\bar\eta = 0
\end{align*}
The corresponding eigenvalues read
\[
\lambda = \frac{\sigma^2}{2}\pm 2\sqrt{(\bar E_0-2\bar E_1+\bar E_2)(E_0-2E_1+E_2)},
\]
with degeneracy $p(p-3)/2$.
Finally the RS stability condition reads
\[
\frac{1}{\sigma^2} > \sqrt{\EE_{x,u}\Bigl(\sech^4\bigl(h(x,u)\bigr)\Bigr)\EE_{x,v}\Bigl(\sech^4\bigl(\bar h(x,v)\bigr)\Bigr)},
\]
which reduces to the same form of the AT line for the SK model when $\kappa=1$, except for the $u$ and $v$ averages that are specific to our model. As seen in Figure~\ref{fig:phasediag} the influence of $\kappa$ is very limited.

\section{Synthetic dataset}\label{app:data}
The multimodal distribution modeling the N-dimensional synthetic data is

\begin{equation}
P(s) = \sum_{c=1}^C p_c\prod_{i=1}^N \frac{e^{h_i^c s_i}}{2\cosh(h_i^c)},
\end{equation}
where \(C\) is the number of clusters, \(p_c\) is a weight and \(\bm{h}^c\) is a hidden field for cluster \(c\). The values for \(p_c\) are taken at random and normalized, while to compute \(h_i^c\) we take into account the magnetizations \(m_i^c = \tanh (h_i^c)\). Expanding over the spectral modes, we can set an effective dimension \(d\) by constraining the sum to the range \(\alpha = 1, \dots , d \)

\begin{equation}
m_i^c = \sum_{\alpha = 1}^d m_{\alpha}^c u_i^\alpha
\end{equation}
Clusters' magnetizations \(m_{\alpha}^c\) are drawn at random between \([-1, 1]\) and normalized with the factor

\begin{equation}
Z = \sqrt{\frac{\sum_{\alpha} m_{\alpha}^2}{d \cdot r}}, \quad r = \tanh (\eta)
\end{equation}
where \(r\) is introduced to decrease the clusters' polarizations (in  our simulations, we used \(\eta = 0.3\)). The spectral basis \( u_i^\alpha \) is obtained by drawing at random \(d\) N-dimensional vectors and applying the Gram-Schmidt process (which can be safely employed as N is supposedly big and thus the initial vectors are nearly orthogonal). The hidden fields are then obtained from the magnetizations

\begin{equation}
h_i^c = \tanh^{-1}(m_i^c)
\end{equation}
and the samples are generated by choosing a cluster according to \(p_c\) and setting the visible variables to \( \pm 1\) according to

\begin{equation}
p(s_i = 1) = \frac{1}{1 + e^{-2 h_i^c}}
\end{equation}

\bibliographystyle{unsrt}
\bibliography{rbm}

\end{document}